\newtheorem{theorem}{Theorem}
\definecolor{dkgreen}{rgb}{0,0.6,0}
\definecolor{gray}{rgb}{0.5,0.5,0.5}
\definecolor{mauve}{rgb}{0.58,0,0.82}
\algnewcommand{\Inputs}[1]{%
  \State \textbf{Inputs:}
  \Statex \hspace*{\algorithmicindent}\parbox[t]{.8\linewidth}{\raggedright #1}
}
\algnewcommand{\Outputs}[1]{%
  \State \textbf{Outputs:}
  \Statex \hspace*{\algorithmicindent}\parbox[t]{.8\linewidth}{\raggedright #1}
}
\newcommand{\StateLong}[1]{%
  \State \parbox[t]{\dimexpr\linewidth-\algorithmicindent}{%
    \hangindent=1em\hangafter=1 #1\strut}}
\algnewcommand\algorithmicinputb{\textbf{Precomputation:}}
\algnewcommand\Precomputation{\item[\algorithmicinputb]}
\algnewcommand\algorithmicinputc{\textbf{On-line computation:}}
\algnewcommand\Onlinecomputation{\item[\algorithmicinputc]}
\tiny\color{gray},
\newlist{steps}{enumerate}{1}
\setlist[steps, 1]{label = Step \arabic*:}
\def\C{{\mathbb C}}
\def\N{{\mathbb N}}
\def\R{{\mathbb R}}
\let\b=\boldsymbol
\def\bhat#1{\hat{\boldsymbol{#1}}}
\def\risingfac#1#2{\left(#1\right)^{\overline{#2}}}
\theoremstyle{definition}
\theoremstyle{plain}
\newtheorem{lemma}{Lemma}
\theoremstyle{definition}
\newtheorem{assumption}{Assumption}
\title{Fast Evaluation of Derivatives of Green's Functions Using Recurrences}
\begin{document}
\tikzstyle{block} = [rectangle, draw, text width=10em, text centered, rounded      corners, minimum height=3em]
\author{Hirish Chandrasekaran, Andreas Kloeckner}
\maketitle 

\begin{abstract}
High-order derivatives of Green's functions are a key ingredient in Taylor-based fast multipole methods, Barnes-Hut $n$-body algorithms, and quadrature by expansion (QBX). In these settings, derivatives underpin the formation, evaluation, and/or translation of Taylor expansions. 

In this article, we provide hybrid symbolic-numerical procedures that generate recurrences to attain an $O(n)$ cost for the computation of $n$ derivatives (i.e. $O(1)$ per derivative) for arbitrary radially symmetric Green's functions. These procedures are general---only requiring knowledge of the PDE that the Green's function obeys. We show that the algorithm has controlled, theoretically-understood error.

We apply these methods to the method of quadrature by expansion, a method for the evaluation of layer potentials, which requires higher-order derivatives of Green's functions. In doing so, we contribute a new rotation-based method for target-specific QBX evaluation in the Cartesian setting that attains dramatically lower cost than existing symbolic approaches. 

Numerical experiments throughout support our claims of accuracy and cost.

\end{abstract}

\section{Introduction}
In this contribution, we develop an automated, symbolic procedure to evaluate high-order derivatives of Green's functions. The procedure takes as input a linear partial differential equation stated in Cartesian coordinates with polynomial coefficients in symbolic form. The Green's function $G$ is not directly an input to the procedure, but it is expected to have radial symmetry (i.e., $G$ depends on $\b{x}$ only through $|\b{x}|$) and to obey the PDE. Based on this information, we provide:
\begin{itemize}
\item an algorithm to compute a recurrence formula, also in symbolic form, that can be used to compute high-order derivatives of $G$, given a number of low-order derivatives,
\item a further algorithm to compute a recurrence formula for Taylor coefficients of the derivatives themselves, to be used in regions where the first recurrence loses accuracy due to rounding, and
\item theoretically-based error analysis that enables a hybrid scheme with controlled error.
\end{itemize}
More general radially symmetric functions are permitted as long as they satisfy the assumptions; however, the requirements of radial symmetry and satisfaction of a PDE appeared, at least to us, narrow enough to warrant the use of the more recognizable term ``Green's function.''

Thus far, the numerical computation of these derivatives has been fraught with difficulties, including: (1) given the wide variety in Green's functions used in practice, generalization across Green's functions is challenging, (2) symbolic computation, even with state-of-the-art systems, yields expressions of suboptimal asymptotic complexity, (3) evaluation of derivative formulas tends to encounter numerical instability.

Green’s functions are foundational tools in the analysis and numerical solution of partial differential equations (PDEs), especially in boundary integral methods. By using the potential of a point source, Green’s functions encode essential information about the underlying physics. In many advanced numerical algorithms—including the fast multipole method (FMM) and quadrature by expansion (QBX)—the computation of derivatives of Green’s functions plays a critical role.

As an example, a multipole expansion based on Taylor expansion in Cartesian coordinates takes the form
\[G(\b{x} - \b{y}) \approx \sum _{| p | \leqslant k
     } \underbrace{\frac{D^p_{\b{y}} G
   (\b{x} - \b{y}) |_{\b{y} = \b{c}}
    }{p!}}_{\text{basis}}
   \underbrace{(\b{y} - \b{c})^p}_{\text{coefficient}},
  \]
where $p$ is to be viewed as a multi-index (see Section~\ref{sec:notation}{} for notation). If the overall order $k$ in this expansion is chosen to be sufficiently high, then many high-order derivatives of the Green's function are required for evaluation of the basis. The recurrence formulas in this contribution allow the computation of each additional derivative based on prior ones at a fixed cost.

Multipole expansions, meanwhile, are a key building block in the fast multipole methods. These accelerate pairwise interactions in $N$-body problems and boundary integral equations for a given kernel. In many cases, expansions used in an FMM implementation are only applicable for a single kernel, making them \emph{kernel-specific} (e.g., \cite{greengard1987fast, greengard1988rapid, greengard1988efficient, greengard2002new}). A number of approaches exist that sidestep this, yielding \emph{kernel-independent} FMMs
\cite{ying2004kernel, zhao1987n, oppelstrup2013matrix, shanker2007accelerated, coles2020optimizing, fernando2023automatic}. Of the kernel-independent FMMs formulated, \cite{zhao1987n, oppelstrup2013matrix, shanker2007accelerated, coles2020optimizing, fernando2023automatic} are based on Taylor series, and these represent the subfamily of approaches to which the methods introduced here are applicable.

\subsection{Green's Functions}
\label{sec:green}
Let $\mathcal{L}$ be an order $c \in \N$ PDE operator such that
\begin{equation}
\label{eq:pdeweak}
\mathcal{L} G(|\b{x}|) = -\delta(\b{x})
\end{equation}
where $\b{x} = (x_1, \dots ,x_d) \in \R^d$ and we interpret \eqref{eq:pdeweak} in a weak sense (cf.~\cite{evans2010partial}, Section 2.2). In the case that $\bhat v = (v_1, \dots, v_d)$ is a vector in $\R^d$, its norm is given by the $l_2$ norm: $|(v_1, \dots, v_d)|_2 = \sqrt{v_1^2 + \dots + v_d^2}$.

We assume $\mathcal{L}$ and $G(|\b{x}|)$ have rotational symmetry. For $G(|\b{x}|)$, rotational symmetry is embedded in its representation since it is the composition of a scalar function $G$ and $|\b{x}|$, where $|\b{x}|$ is rotationally symmetric.
Then if $f \in C^c(\R^d)$, a solution of
\[
\mathcal{L} u  = -f
\]
is given by $u = G * f$, where we let $*$ represent the convolution operator, since:
\[
\mathcal{L} (G * f)= (f * \mathcal{L} G) = f * (-\delta) = -f,
\]
We consider PDEs that have polynomial coefficients:
\[
\mathcal{L}= \sum_{\b{q} \in \mathcal{M}(c)} p_q(\b{x})  \partial^{\b{q}}_{\b{x}}
\]
where $p_q(\b{x}) \in \C[x_1, \dots, x_d]$.
If the PDE has transcendental coefficients, then a local polynomial approximation can be constructed at the evaluation point of interest, however methods for doing so are beyond the scope of the present work.

\begin{table}[htbp]
   \centering
   \begin{tabular}{@{} lc @{}} 
      \toprule
      PDE    & Radially Symmetric Green's Function \\
      \midrule
      Laplace 2D      & $-\frac{\log(|{\b x}|)}{2\pi}$ \\
      Laplace 3D       & $-\frac{1}{4\pi|{\b x}|}$  \\
      Helmholtz 2D      &  $\frac{i}{4} H_0^{(1)}(k |{\b x}|)$     \\
      Helmholtz 3D       & $\frac{e^{i k |{\b x}|}}{4\pi |{\b x}|}$  \\
      Yukawa 2D       & $\frac{1}{2\pi} K_0(k|{\b x}|) $  \\
      Yukawa 3D       & $\frac{e^{-k|{\b x}|}}{4\pi |{\b x}|} $  \\
      Biharmonic 2D       & $\frac{|{\b x}|^2}{8\pi} \log| {\b x}|$  \\
      Biharmonic 3D       & $-\frac{|{\b x}|}{8 \pi} $  \\
      \bottomrule
   \end{tabular}
   \caption{A \textit{non}-exhaustive list of applicable radially symmetric Green's functions.}
   \label{tab:booktabs}
\end{table}

\subsection{Derivatives of a Green's Function}
Let $G(|{\b x}|)$ be a radially symmetric Green's function where ${\b x} =(x_1, \dots, x_d) \in \R^d$. The goal of this paper is to present an algorithm to find recurrences for the $n$th directional derivative of $G(|\b{x}|)$ in the $\hat{x}_1$ direction:
\begin{equation}
\label{eq:we-want-recurrence}
\frac{d^n}{dt^n}|_{t=0} G(|{\b x} + \hat{x}_1 t|) = \partial^n_{x_1} G(|{\b x}|) \qquad  (n \in \N),
\end{equation}
where $\hat{x}_1$ is the elementary unit vector along the $x_1$-axis. Observe that, due to the rotational symmetry of $G$, it is sufficient to only consider derivatives in the $\hat{x}_1$-direction, as any derivative direction can be rotated to align with $\hat{x}_1$, as shown in Figure~\ref{fig:rotational-sym}.

\begin{figure}[!ht]

\centering
\begin{tikzpicture}[scale=0.75]
\draw [->, thick] (0,0) -- (5,0) node [anchor=north,pos=1] {${x}_1$};
\foreach\rad in {1,2,3,4}
  \draw  [ dashed](0,0) circle (\rad);
\foreach\i/\rot/\anchor in {1/0/south,2/40/south,3/80/east}
	\draw [->,rotate=\rot] (1,1) -- ++(2,0) node [pos=0.5,anchor=\anchor] {$\hat v_\i$};
\foreach\i/\rot/\anchor in {1/0/south,2/40/south,3/80/east}
\filldraw[black,anchor=\anchor,rotate=\rot] -- ++(2,1) circle (2pt);
\end{tikzpicture}

\caption{Directional derivatives taken at three points along $\hat{v}_1, \hat{v}_2, \hat{v}_3$ are identical due to radial symmetry of $G(|\b{x}|)$.}
\label{fig:rotational-sym}
\end{figure}

Suppose $r \in \R_{ > 0}$. The derivatives
\[
\frac{d^n}{dr^n} G(r) \qquad  (n \in \N),
\]
are a subset of the derivatives considered in \eqref{eq:we-want-recurrence} (simply confine ${\b x}$ to the $x_1$-axis). As an example, one may use our methods to obtain a recurrence for the derivatives of the Hankel function of the first kind of order 0 (see Table \ref{tab:booktabs}).

\subsection{Prior Work and Novelty}
In this work, we make the following contributions:
\begin{itemize}
\item For arbitrary Green's functions that satisfy a PDE with polynomial coefficients (as described in Section~\ref{sec:green}),
provide a method that computes $n$ derivatives of type \eqref{eq:we-want-recurrence} at a cost of $O(n)$ floating point operations, i.e., $O(1)$ per derivative.
This is achieved through the symbolic computation of a recurrence relation for the derivatives.
\item We identify regions in which the evaluation of the recurrence encounters numerical issues mainly due to cancellation. To address the instability, we propose a novel alternate evaluation scheme making use of Taylor expansions of derivatives about an axis.
\item We provide a model of the error incurred in both types of evaluation, motivating a hybrid scheme. Our error model leads to an error bound that unconditionally controls the error incurred in the hybrid scheme.
\item We provide numerical experiments that numerically validate our method development and theoretical findings.
\item We demonstrate the integration of our method into QBX. This integration alone provides an asymptotic cost reduction, that is particularly noticeable, for example, in the case of the Helmholtz PDE. The integration is algorithmically straightforward and has a negligible impact on overall method error.
\item Compounding with the previous point, we provide a new rotation-based method for target-specific QBX evaluation in the Cartesian setting that attains dramatically lower cost than existing symbolic approaches, for all kernels.
\end{itemize}

In the following, we discuss connections between the contributions above and various existing approaches in the literature.

A mixed derivative is an arbitrary mixed partial derivative expression, i.e.
\[
\partial^{n_1}_{x_1} \partial^{n_2}_{x_2} \dots \partial^{n_d}_{x_d} G(|\b{x}|)
\] where $n_1, \dots, n_d \in \N_{0}$. Tausch \cite{tausch2003fast} proposes a method to calculate mixed derivatives of radially symmetric kernels given knowledge of the set of radial derivatives beforehand
\begin{equation}
\label{eq:tausch}
\left\{\left( \frac{1}{r} \frac{d}{dr} \right)^i G(r) \right\}_{i \in\{0, \dots, p\}}
\end{equation}
using a recurrence formula as well. Suppose $p \in \N$ and let $\mathcal{D}^p$ represent all mixed derivatives of order less than or equal to $p$. There are $O(p^{d})$ derivatives in $\mathcal{D}^p$. Their algorithm provides a way to compute the $O(p^d)$ derivatives in $\mathcal{D}^p$ with $O(p^{d+1})$ work. This means the cost per derivative is about $O(p)$. Furthermore, in order to get this asymptotic cost, Tausch considers the specific case of the Laplace equation where a recurrence is utilized to compute the radial derivatives in \eqref{eq:tausch}. Given an arbitrary kernel, it is not clear a priori if a recurrence for the radial derivatives exists.

In \cite{slevinsky2011new}, higher order derivative formulas are derived that can be applied to $k$-times differentiable functions such that
\[
\left( \frac{d}{x^m dx} \right)^k (x^{-n} G(x)),
\]
using the notation of \cite{slevinsky2011new} for the operator
\[\frac{d}{x^m dx} := \frac{1}{x^m}\frac{d}{dx},\]
are well-defined for some $m, n \in \N$. Their work gives a recurrence formula that is $O(n)$ per derivative. This formula follows from the product rule: a function obeying an ODE has an associated recurrence formula for its derivatives. To see why, if $f(x)$ is a function satisfying a differential equation of the form
\[
f^{(m)}(x) = \sum_{k=0}^{m-1} p_k(x) f^{(k)}(x),
\]
then the derivatives of $f(x)$ can be represented as:
\begin{equation}
\label{eq:diff-eq-formula}
f^{(m+n)}(x) = \sum_{k=0}^{m-1} \sum_{i=0}^n \binom{n}{i} p_k^{(i)}(x) f^{(k+n-i)}(x).
\end{equation}
We make use of \eqref{eq:diff-eq-formula} to a similar end.

Kauers et al. \cite{kauers2011concrete} establish a relationship between holonomic recurrences and holonomic power series. A \textit{holonomic} recurrence of order $r$ and degree $d$ is one such that there exist polynomials $p_0(x), \dots, p_r(x) \in \mathbb{K}[x]$ of degree at most $d$ with $p_0(x) \ne 0 \ne p_r(x)$ such that
\[
p_0(n) a_n + p_1(n) a_{n+1} + \dots + p_r(n) a_{n+r} = 0
\]
and a power series $a(x) \in \mathbb{K}[[x]]$ is called \textit{holonomic} of order $r$ and degree $d$ if there exist polynomials $q_0(x), \dots, q_r(x) \in \mathbb{K}[x]$ of degree at most $d$ with $q_0(x) \ne 0 \ne q_r(x)$ such that
\begin{equation}
\label{eq:holonomic}
q_0(x) a(x) + q_1(x) \partial^x a(x) + \dots + q_r(x) \partial^r_x a(x) = 0
\end{equation}
where \eqref{eq:holonomic} describes a \textit{holonomic differential equation} of order $r$ and degree $d$.
Every holonomic power series has coefficients that satisfy a holonomic recurrence. This follows from inserting the power series representation of $a(x) = \sum_i a_i x^i$ into \eqref{eq:holonomic}, and then collecting coefficients. The resulting expression is a holonomic recurrence for $a_i$. Note that the coefficients of a power series (centered at $x=0$) are clearly related to its derivatives at $x=0$, which means this gives a relationship between derivatives of a function and the associated holonomic differential equation it satisfies. This conveys the same information as \eqref{eq:diff-eq-formula}.

In \cite{zhang2013fast}, a differential algebra framework is used to compute derivatives of algebraic functions. A differential algebra extends the definitions of a ring to include a derivation, an operator that models a derivative. A differential algebra framework 
allows one to evaluate all $n$ derivatives of a multivariate scalar function of $v$ variables, say $f(x_1, \dots, x_v)$, as a finite set of arithmetic operations in the differential algebra $\tensor*[^{}_n]{{D}}{^{}_v}$. The complexity of this method is asymptotic with the complexity to compute elementary operations ($+, \times, -, \div$) in $\tensor*[^{}_n]{{D}}{^{}_v}$. The number of floating point operations to compute the multiplication of two elements in $\tensor*[^{}_n]{{D}}{^{}_v}$  \cite{hawkes1999modern} is
\[
\frac{(n+2v)!}{n!(2v)!}.
\]
This means the cost to compute $O(n^v)$ derivatives is $O(n^{2v})$, and the cost per derivative amortized is $O(n^v)$.

In \cite{fernando2023automatic}, the existence of mixed derivative recurrences for the 2D/3D Laplace and Helmholtz case was shown. These recurrences were generated in a guess-and-check manner and lower the cost per derivative to $O(1)$. \cite{fernando2023automatic} uses those recurrences as a crucial ingredient in lowering the operation count of certain translation operators.
Our work can be seen as extending the reach of these methods by providing a means for automatically identifying
recurrences for a broad class of operators.

\subsection{Overview of the Paper}

In Section~\ref{sec:making-recurrences}, we first define some fundamental notation and then discuss the derivation of a recurrence for a radially-symmetric function obeying a PDE. A consideration of empirical error behavior of this recurrence motivates the introduction of an alternate computation approach for derivatives as well as a detailed study of the error behavior of both approaches, leading to a hybrid method with controlled error throughout. In Section~\ref{sec:qbx}, we discuss the relevance of our recurrences to the QBX method, which we first review. We provide an algorithmic approach analogous to `target-specific QBX' that can drastically reduce computational cost in QBX, compounding with savings from the use of the recurrence. In Section~\ref{sec:conclusions}, we close with some remarks on potential impacts as well as directions for future work.

\section{Methodology to Produce Recurrences}
\label{sec:making-recurrences}

\subsection{Prerequisite Notation and Concepts}
\label{sec:notation}
We review a number of fundamental notions that will be useful throughout the developments of the paper, including multi-index notation and partitions/vector partitions as well as a number of Faa Di Bruno formulas. Let ${\b x} = (x_1, \dots, x_d) \in \R^d$ and let there be a multi-index $\b\alpha = (\alpha_1, \dots, \alpha_d) \in \N^d_{\b{0}}$ and $\b\beta = (\beta_1, \dots, \beta_d) \in \N^d_{\b{0}}$. Then
\begin{align}
\partial^{\b\alpha}_{{\b x}} &:= \partial^{\alpha_1}_{x_1} \dots \partial^{\alpha_d}_{x_d},\\
\b\alpha \leq \b\beta &:\Leftrightarrow \alpha_i \leq \beta_i \qquad  \qquad(i \in \{1, \dots, d\}),\\
\b\alpha ! &:= \alpha_1 ! \dots \alpha_d !,\\
{\b x}^{\b\alpha} &:= x_1^{\alpha_1} \dots x_d^{\alpha_d},\\
|\b{\alpha}| &:= |\b{\alpha}|_1 =  \sum_i \alpha_i. \label{eq:mi-magnitude}
\end{align}
As a way of fixing notation for the magnitude $|\cdot|$, \eqref{eq:mi-magnitude} implies that when $\b{\alpha} = (n_1, \dots, n_d)$ is a multi-index in $\N^d_{\b{0}}$, its magnitude is given by the $l^1$ norm: $|(n_1, n_2, \dots, n_d)|_1 = n_1 + n_2 + \dots + n_d$. By contrast, when $\bhat v = (v_1, \dots, v_d)$ is a vector in $\R^d$, its norm is given by the $l^2$ norm: $|(v_1, \dots, v_d)|_2 = \sqrt{v_1^2 + \dots + v_d^2}$.

Let the set of multi-indices with magnitude less than or equal to $c \in \N$ be denoted by
\[
\mathcal{M}(c) := \left\{ \b{q} = (q_1, \dots, q_d) \in \N^d_{\b{0}} : \sum_{i=1}^d q_i \leq c\right\}.
\]
Let $\b{0}$ be the zero vector in $d$ dimensions. A \emph{vector partition} of $\b{\alpha} \in \N^d_{\b{0}}$ is a set of vectors in $\N^d_{\b{0}} \setminus \b{0}$ that add up to $\b{\alpha}$. It can be represented by a function $\pi : \N^d_{\b{0}} \to \N_{0}$, such that if $\b{\beta} \in \N^d_{\b{0}}$, $\pi(\b{\beta})$ represents the number of times $\b{\beta}$ appears in the vector partition. Clearly $\pi$ maps to 0 except on a finite set representing the unique elements of the partition.  Let $P_{\b{\alpha}}$ denote the set of vector partitions of $\b{\alpha}$. Let $\pi \in P_{\b{\alpha}}$. We define the following operations on vector partitions:
\begin{align}
\label{eq:cardinality-part}
|\pi| &= \sum_{\b{\beta} \in \N^{d}_{\b{0}}} \pi(\b{\beta}) \qquad  (\text{cardinality}),\\
\pi! &= \prod_{\b{\beta} \in \N^{d}_{\b{0}}} \pi(\b{\beta})!.
\end{align}
If $k \in \N$, we denote as $P_{\b{\alpha}, k}$ the set of partitions of $\b{\alpha}$ with cardinality $k$:
\[
P_{\b{\alpha}, k} = \{ \pi \in P_{\b{\alpha}} : |\pi| = k\}.
\]
As a mild generalization, we also define the notation $P_{l,m}$ with $l\in\mathbb N$ consistent with the above, where $l$ should be viewed as a multi-index of length 1. From \cite{turcu2020vector}, we have the following Faa Di Bruno rule for $n$ derivatives of a composition of functions $f : \R \to \R$ and $g : \R^d \to \R$ where ${\b z} \in \R^d$:
\begin{equation}
\label{eq:multi-faa-di-bruno-2}
\partial^{\b\alpha}_{\b z} f(g({\b z})) = \sum_{k=1}^{|\alpha|} f^{(k)}(g(\b{z})) \sum_{\pi \in P_{\b{\alpha},k}} \frac{\b{\alpha}!}{\pi!} \prod_{\b{\beta} \leq \b{\alpha}, |\b{\beta}| \leq |\b{\alpha}|-k+1} \left( \frac{\partial^{\b{\beta}}_{\b{z}} g(\b{z})}{\b{\beta}!} \right)^{\pi(\b{\beta})}.
\end{equation}
We further require from \cite{functions.wolfram.com} a rule for $n$ derivatives of a composition with a squared input:
\begin{equation}
\label{eq:chain-rule-square}
\frac{\partial^n f(z^2)}{\partial z^n} = \sum_{k=0}^n \frac{\risingfac{2k-n+1}{2(n-k)}}{(n-k)!(2z)^{n-2k}} f^{(k)}(z^2)
\end{equation}
and again from \cite{functions.wolfram.com} a rule for $n$ derivative of a composition with a square root input:
\begin{equation}
\label{eq:chain-rule-root}
\frac{\partial^n f(\sqrt{z})}{\partial z^n} = \sum_{k=0}^n \frac{(-1)^{n-k} \risingfac{k}{2(n-k)}}{(n-k)!(2 \sqrt{z})^{2n-k}} f^{(k)}(\sqrt{z})
\end{equation}
where
\[\risingfac a n := \prod_{k=0}^{n-1} (a+k)\]
is the rising factorial or Pochhammer symbol.

Another expression from \cite{functions.wolfram.com} that is a result of the product rule is
\begin{equation}
\label{eq:productrule1}
\frac{d^n}{dx^n}  x^p f(x) = \sum_{l=0}^p \left(\prod^{l-1}_{k=0}(n-k)\right) \binom{p}{l} x^{p-l} \frac{d^{n-l}}{dx^{n-l}} f(x)\qquad (n \geq 0)
\end{equation}
 if $n \in \N_{0}$, $p \in \N$ and $f: \R \to \R$, $f \in C^n(\R)$.
\subsection{PDE to ODE}
\label{sec:pde-to-ode}
Given a PDE with polynomial coefficients satisfied by $G(|\b{x}|)$, we derive an ODE satisfied by $G(|\b{x}|)$ with respect to $x_1$, treating $(x_2, \dots, x_d)$ as parameters. The ODE coefficients depend on $(x_2, \dots, x_d)$, so the equation holds parametrically: for each fixed choice of $(x_2, \dots, x_d)$, it is an ODE in $x_1$. The main result of this section (Theorem \ref{thm:pdeetoode}) is that if the PDE has coefficients in $\C[x_1, \dots, x_d]$, then this ODE has coefficients in $\C[x_1, \dots, x_d]$. 

To derive the ODE, we take the following steps:
\begin{itemize}
\item Substitute all spatial derivatives of the PDE with radial derivatives applying the multidimensional Faa Di Bruno chain rule \eqref{eq:multi-faa-di-bruno-2}.
\item Take these radial derivatives and substitute them for partial $x_1$ derivatives using the Faa Di Bruno formula when $d=1$.
\item The end result is an ODE in $x_1$ satisfied by $G(|\b{x}|)$ with polynomial coefficients $x_1, \dots, x_d$. 
\end{itemize}

To illustrate the process, we consider the case of the Laplace PDE in two dimensions as a simple example.

\begin{myexample}{Laplace in 2D}{laplace-ode}
Starting with the Laplace PDE in 2D we derive an ODE in $x_1$ satisfied by $G(|\b{x}|)$. Suppose $\b{x} = (x_1, x_2) \in \R^2$, $r = |\b{x}|$, and $G(|\b{x}|)$ is the Green's function for the Laplace PDE in 2D:
\[
\partial^2_{x_1} G(|\b{x}|) + \partial^2_{x_2} G(|\b{x}|) = 0, \qquad \b{x} \ne 0.
\]
\begin{enumerate}
 \item Substitute $\partial_{x_i} = \left(\partial_{x_i} r\right) \partial_r$ for $i \in \{1, 2\}$
\[
 \begin{aligned}
[\left(\partial_{x_1} r\right) \partial_r]^2 G(|\b{x}|) + [\left(\partial_{x_2} r\right) \partial_r]^2 G(|\b{x}|) = 0, \qquad \b{x} \ne 0 \\
\frac{1}{r} \partial_rG(r) + \partial^2_rG(r) = 0, \qquad r \ne  0 .
\end{aligned}
\]
 \item Substitute $\partial_r = \left(\partial_r x_1\right) \partial_{x_1}$
\[
\frac{1}{r} [\left(\partial_r x_1\right) \partial_{x_1}]G(r) + [\left(\partial_r x_1\right) \partial_{x_1}]^2G(r) = 0, \qquad r \ne  0
\]
\[
\left(\frac{x_2^2}{x_1^2} + 1\right)\partial^2_{x_1} G(|\b{x}|) + \left(\frac{x_1^2 - x_2^2}{x_1^3} \right)\partial_{x_1} G(|\b{x}|) = 0, \qquad  \b{x} \ne 0.
\]
\item Normalize with multiplication by $x_1^3$:
\begin{equation}
\label{eq:normalized-laplace}
\left(x_2^2 x_1 + x_1^3\right)\partial^2_{x_1} G(|\b{x}|) + \left(x_1^2 - x_2^2\right)\partial_{x_1} G(|\b{x}|) = 0, \qquad  \b{x} \ne 0.
\end{equation}
\eqref{eq:normalized-laplace} is an ODE satisfied by the first and second $x_1$-derivatives of $G$.
\end{enumerate}
\end{myexample}

Suppose we have that $\mathcal{L}_{\b{x}}G(|{\b x}|) = 0$ for $\b{x} \in \R^d \setminus \{0\}$ where $\mathcal{L}_{\b{x}}$ is a linear PDE of order $c \in \N$ with multivariate polynomial coefficients $p_{\b{q}}(\b{x}) \in \C[x_1, \dots, x_d]$:
\begin{equation}
\label{eq:PDE}
\mathcal{L}_{\b{x}} G(|{\b x}|)=\sum_{\b{q} \in \mathcal{M}(c)} p_{\b{q}}({\b x}) \partial^{\b{q}}_{{\b x}} G(|{\b x}|)  = \delta(\b{x}).
\end{equation} 
Then Theorem \ref{thm:pdeetoode} below shows that \eqref{eq:PDE} together with \eqref{eq:normalization:factor} gives rise to an ODE in $x_1$ with coefficients in $\C[x_1, \dots, x_d]$ satisfied by $G(|\b{x}|)$. We begin with an intermediate technical result.

\begin{lemma}
\label{lem:dx1dr}
Suppose $l, m \in \N$, $m \leq l$. If $\pi \in P_{l,m}$, then
\[
\prod_{1 \leq j \leq l-m+1} \left( \frac{\partial^j_r x_1}{j!} \right)^{\pi(j)} = \frac{1}{r^l x_1^{2l-m}}  e_{\pi,l}(\b{x}),
\]
where $e_{\pi,l}(\b{x}) \in \C[x_1, \dots, x_d]$.
\end{lemma}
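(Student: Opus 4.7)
Treat $x_1$ as a function of $r$ with $x_2,\dots,x_d$ held fixed; equivalently, $x_1 = \sqrt{r^2 - A}$ where $A := x_2^2 + \cdots + x_d^2$. Differentiating the identity $x_1^2 + A = r^2$ once gives $x_1\,\partial_r x_1 = r$, hence $\partial_r x_1 = r/x_1$.

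The first task is to establish, by induction on $j \geq 1$, the single-derivative formula
$$\partial_r^j x_1 \;=\; \frac{r^{\,j \bmod 2}\, P_j(\pmb{x})}{x_1^{\,2j-1}}, \qquad P_j \in \C[x_1,\dots,x_d].$$
The base case $j=1$ holds with $P_1 = 1$. For the step, differentiate both sides of the formula for $j$, observing that $P_j$ depends on $r$ only through $x_1$, so $\partial_r P_j = (\partial_{x_1} P_j)\cdot r/x_1$. Splitting the computation into the cases $j$ even and $j$ odd, the identity $r^2 = x_1^2 + A \in \C[\pmb{x}]$ is used to absorb any stray $r^2$ factor in the numerator, producing the next polynomial $P_{j+1}$.

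With this single-derivative formula in hand, substitute into the product and consolidate exponents using $\sum_j j\,\pi(j) = l$ and $\sum_j \pi(j) = m$:
$$\prod_{1 \leq j \leq l-m+1} \left( \frac{\partial_r^j x_1}{j!} \right)^{\pi(j)} \;=\; \frac{r^{s}\, Q(\pmb{x})}{x_1^{\,2l - m}}, \qquad s := \sum_{j} (j \bmod 2)\, \pi(j),$$
where $Q(\pmb{x}) := \prod_j (j!)^{-\pi(j)} P_j(\pmb{x})^{\pi(j)} \in \C[\pmb{x}]$. Rearranging to match the target form yields
$$e_{\pi,l}(\pmb{x}) \;=\; r^{\,l+s}\, x_1^{\,m-1}\, Q(\pmb{x}).$$
Since every partition has at least one part, $m \geq 1$ and $x_1^{\,m-1}$ is already a polynomial, so it only remains to verify that $r^{\,l+s}$ is a polynomial in $\pmb{x}$.

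The one genuinely nonobvious step, and the main obstacle, is showing that $l + s$ is even. This is a short parity argument: for each $j$, one has $j\,\pi(j) \equiv (j \bmod 2)\,\pi(j) \pmod{2}$, and summing over $j$ yields $l \equiv s \pmod{2}$. Hence $l + s$ is even and $r^{\,l+s} = (x_1^2 + x_2^2 + \cdots + x_d^2)^{(l+s)/2} \in \C[\pmb{x}]$. Therefore $e_{\pi,l}(\pmb{x}) \in \C[\pmb{x}]$, completing the proof.
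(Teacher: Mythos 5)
Your proof is correct, and it follows the same two-stage outline as the paper (first a formula for a single derivative $\partial_r^j x_1$, then a product over the parts of the partition), but the way you obtain the single-derivative formula is genuinely different and, in places, more careful. The paper writes $x_1=h(r^2)$ with $h(\xi)=\sqrt{\xi-(x_2^2+\cdots+x_d^2)}$, invokes the closed-form chain rule \eqref{eq:chain-rule-square} for derivatives of $f(z^2)$, and computes $h^{(l)}$ explicitly to land on $\partial_r^j x_1=\tfrac{1}{r^j x_1^{2j-1}}w_j(\pmb{x})$; you instead differentiate the identity $x_1^2+A=r^2$ implicitly and run a short induction, which is self-contained and avoids the cited special-function formula entirely. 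The two single-derivative formulas are equivalent (your $r^{\,j\bmod 2}P_j/x_1^{2j-1}$ equals the paper's form after multiplying numerator and denominator by $r^{j}$, since $j+(j\bmod 2)$ is even). Your version of the product step is also tighter on the bookkeeping: the paper asserts $\sum_i(2j_{\pi,i}-1)=2l-1$, which is really $2l-m$, and silently absorbs the resulting $x_1^{m-1}$ into $e_{\pi,l}$; you make that absorption explicit, and your parity argument ($l\equiv s\pmod 2$, hence $r^{\,l+s}\in\C[\pmb{x}]$) supplies the justification for why the leftover power of $r$ is even --- a point the paper handles only implicitly by keeping all powers of $r$ even throughout $w_j$. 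Both proofs implicitly require $m\le l$ (a partition of $l$ into $m$ nonzero parts), which is at odds with the lemma's stated hypothesis $l\le m$; that appears to be a typo in the statement rather than a gap in either argument.
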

\begin{proof}
Let $x_2,\dots,x_d$ be fixed.
Let $h(\xi) = \sqrt{\xi-(x_2^2 + \dots + x_d^2)}$. Then, if $x_1 > 0$, $x_1$ is the composition of $h$ and a square of $r$, that is, $h(r^2) = |x_1| = x_1$. If $x_1 < 0$, we consider the alternate case $\tilde h(\xi) = -\sqrt{\xi-(x_2^2 + \dots + x_d^2)}$ which can be treated analogously. We will assume $x_1 > 0$ without loss of generality. Next, rewrite $\partial^j_{r} x_1 = \partial^j_{r} h(r^2)$ using \eqref{eq:chain-rule-square}:
\begin{equation}
\label{eq:partialjr}
\partial^j_{r} x_1 =\partial^j_{r} h(r^2)=\sum_{i=0}^j \frac{\risingfac{2i-j+1}{2(j-i)}}{(j-i)!(2r)^{j-2i}} h^{(i)}(r^2).
\end{equation}
A simple inductive proof shows that
\[
h^{(i)}(\xi) =c_i [\xi - (x_2^2 + \dots + x_d^2)]^{\frac{1}{2}-i},
\qquad \text{with}\qquad
c_i =\begin{cases} 1 &i=0\\
 \frac{(-1)^{i-1}(2i-3)!!}{2^i} & i \geq 1.
\end{cases}
\]
This implies that $h^{(i)}(r^2)$ is of the form $c_i x_1^{1-2i}$ where $c_i \in \R$. We can then rewrite \eqref{eq:partialjr} as
\begin{equation}
\label{eq:partialjrsimp}
\partial^j_{r} x_1  = \sum_{i=0}^j \frac{\risingfac{2i-j+1}{2(j-i)}}{(j-i)!(2r)^{j-2i}} c_i \frac{1}{x_1^{2i-1}} = \frac{1}{r^j x_1^{2j-1}}\sum_{i=0}^j \frac{r^{2i}\risingfac{2i-j+1}{2(j-i)}}{(j-i)! \, 2^{j-2i}} c_i x_1^{2j-2i}.
\end{equation}
Let
\[
w_{j}(\b{x}) = \sum_{i=0}^j \frac{r^{2i}\risingfac{2i-j+1}{2(j-i)}}{(j-i)! \, 2^{j-2i}} c_i x_1^{2j-2i}.
\]
$w_{j}(\b{x})$ is a polynomial in $x_1, \dots, x_d$ due to the non-negative powers of $x_1$ and even powers of $r$ present in its expression. We can then rewrite \eqref{eq:partialjrsimp} as
\begin{equation}
\label{eq:partialjfinalsimp}
\partial^j_{r} x_1  = \frac{1}{r^j x_1^{2j-1}} w_{j}(\b{x}).
\end{equation}
Suppose $\pi \in P_{l,m}$. Let $j_{\pi,1}, \dots, j_{\pi, m} \in \N$ correspond to the partition of $l$, $\pi$, with duplicates ($j_{\pi,1}, \dots, j_{\pi, m}$ are possibly non-unique). It follows that $j_{\pi,1} + \dots + j_{\pi,m} = l$. Then it must be the case using \eqref{eq:partialjfinalsimp} that:
\begin{align*}
 \prod_{1 \leq j \leq l-m+1} \left( \frac{\partial^j_r x_1}{j!} \right)^{\pi(j)}
&=  \prod_{1 \leq j \leq l-m+1} \left( \frac{1}{r^j x_1^{2j-1}} \frac{w_{j}(\b{x})}{j!} \right)^{\pi(j)}\\
&= \left( \frac{1}{r^{j_{\pi,1}} x_1^{2j_{\pi,1}-1}} \frac{w_{j_{\pi,1}}(\b{x})}{j_{\pi,1}!} \right) \cdots
 \left( \frac{1}{r^{j_{\pi,m}} x_1^{2j_{\pi,m}-1}} \frac{w_{j_{\pi,m}}(\b{x})}{j_{\pi,m}!} \right).
\end{align*}
Collecting powers of $r$ gives an exponent of $j_{\pi,1} + \dots + j_{\pi,m} = l$. Collecting powers of $x_1$ gives exponent $(2j_{\pi,1}-1) + \dots + (2j_{\pi,m}-1) = 2l - m$. Therefore,
\begin{equation}
\label{eq:partialjrprodsimp}
\prod_{1 \leq j \leq l-m+1} \left( \frac{\partial^j_r x_1}{j!} \right)^{\pi(j)} = \frac{1}{r^l x_1^{2l-m}}  e_{\pi,l}(\b{x}),
\end{equation}
where
\[e_{\pi,l}(\b{x}) = \prod_{p=1}^{m} \frac{w_{j_{\pi,p}}(\b{x})}{j_{\pi,p}!}\]
is a polynomial in $x_1, \dots, x_d$.
\end{proof}

As in the proofs above, throughout the remainder of the article, $x_2, \dots, x_d$
are assumed constant.

\begin{theorem}
\label{thm:pdeetoode}
Suppose $G : \R \to \C$, $\b{x} \in \R^d$, $\b q\in \N_0^d$, and $r = |\b{x}|$. Then there exist $n_1, n_2 \in \N_0$, $\lambda_i \in \C[x_1, \dots, x_d]$ depending on $\b q$ such that
\begin{equation}
\label{eq:normalization:factor}
r^{2n_1} x_1^{n_2} \partial^{\b{q}}_{\b{x}} G(|\b{x}|) = \sum_{i=0}^{|\b{q}|} \lambda_i(\b{x})\partial^i_{x_1} G(|\b{x}|).
\end{equation}
\end{theorem}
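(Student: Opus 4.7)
The plan is to chain together two applications of Faà di Bruno. First, I use the multivariate rule \eqref{eq:multi-faa-di-bruno-2} to rewrite $\partial^{\pmb{q}}_{\pmb{x}} G(|\pmb{x}|)$ as a linear combination of one-dimensional radial derivatives $G^{(k)}(r)$. Second, I use the one-dimensional Faà di Bruno formula, viewing $x_1$ as a function of $r$ (with $x_2, \dots, x_d$ fixed), to re-expand each $G^{(k)}(r)$ as a linear combination of $\partial^i_{x_1} G(|\pmb{x}|)$. The preceding lemma already controls the form of the products $\prod_j (\partial^j_r x_1 / j!)^{\pi(j)}$ appearing in this second step; the main remaining work is to book-keep denominators in $r$ and $x_1$ carefully enough that a single monomial $r^{2n_1} x_1^{n_2}$ clears them all simultaneously.

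For the first step, I would establish by induction (or directly from \eqref{eq:chain-rule-root} applied to $r = \sqrt{x_1^2 + \dots + x_d^2}$) that for every multi-index $\pmb{\beta}$ one has $\partial^{\pmb{\beta}}_{\pmb{x}} r = P_{\pmb{\beta}}(\pmb{x})/r^{2|\pmb{\beta}|-1}$ with $P_{\pmb{\beta}} \in \C[x_1,\dots,x_d]$. Substituting this into \eqref{eq:multi-faa-di-bruno-2} and using that any $\pi \in P_{\pmb{q},k}$ satisfies $\sum_{\pmb{\beta}} |\pmb{\beta}|\,\pi(\pmb{\beta}) = |\pmb{q}|$ and $\sum_{\pmb{\beta}} \pi(\pmb{\beta}) = k$, the product of denominators collapses to $r^{2|\pmb{q}|-k}$. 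I therefore obtain
\begin{equation*}
\partial^{\pmb{q}}_{\pmb{x}} G(|\pmb{x}|) \;=\; \sum_{k=1}^{|\pmb{q}|} \frac{A_k(\pmb{x})}{r^{2|\pmb{q}|-k}}\, G^{(k)}(r),
\end{equation*}
for some polynomials $A_k \in \C[x_1,\dots,x_d]$.

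For the second step, I apply the one-variable Faà di Bruno formula to $G^{(k)}(r)$ regarded as the composition $G \circ x_1$ where $x_1(r) = \sqrt{r^2 - (x_2^2 + \dots + x_d^2)}$ (the sign ambiguity is handled as in the lemma, without loss of generality). This yields
\begin{equation*}
G^{(k)}(r) \;=\; \sum_{i=1}^{k} \Bigl(\partial^i_{x_1} G(|\pmb{x}|)\Bigr) \sum_{\pi \in P_{k,i}} \frac{k!}{\pi!}\prod_{1\le j\le k-i+1} \left(\frac{\partial^j_r x_1}{j!}\right)^{\pi(j)}.
\end{equation*}
Invoking the preceding lemma on each inner product, the whole inner sum becomes $B_{k,i}(\pmb{x})/(r^k x_1^{2k-1})$ for some polynomial $B_{k,i}$. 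Hence $r^k x_1^{2k-1}\, G^{(k)}(r)$ is a polynomial-coefficient combination of the $\partial^i_{x_1} G$.

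Substituting the second expansion into the first and multiplying through by the uniform factor $r^{2|\pmb{q}|}\,x_1^{2|\pmb{q}|-1}$, each denominator $r^{2|\pmb{q}|-k} r^k x_1^{2k-1}$ is cleared by $r^{2|\pmb{q}|} x_1^{2|\pmb{q}|-1}$, leaving the deficit $x_1^{2(|\pmb{q}|-k)}$ as a polynomial factor. This proves the identity with $n_1 = |\pmb{q}|$ and $n_2 = 2|\pmb{q}|-1$, and the coefficients $\lambda_i(\pmb{x})$ are explicit polynomial combinations of the $A_k$, $B_{k,i}$ and powers of $x_1$. The main pitfall I anticipate is keeping the partition bookkeeping consistent between the two Faà di Bruno applications, in particular verifying that the total exponent in $r$ collected from the first step combines correctly with the $r^{-k}$ from the second; the identity $\sum_{\pmb{\beta}} |\pmb{\beta}|\pi(\pmb{\beta}) = |\pmb{q}|$ is what makes this cancellation uniform in $\pmb{q}$.
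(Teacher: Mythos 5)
Your proof is correct and follows the same two--stage architecture as the paper's: a multivariate Fa\`a di Bruno application to reduce $\partial^{\pmb{q}}_{\pmb{x}}$ to radial derivatives, followed by a one-dimensional Fa\`a di Bruno (with $x_1$ viewed as a function of $r$, controlled by the preceding lemma) to return to $x_1$-derivatives, and finally a single monomial clearing all denominators. The one genuine difference is in the first stage. The paper decomposes $G(|\pmb{x}|)=f\circ g$ with the \emph{polynomial} inner function $g(\pmb{x})=x_1^2+\cdots+x_d^2$ and $f(\xi)=G(\sqrt{\xi})$, so that the multivariate Fa\`a di Bruno step produces purely polynomial coefficients $b_{\pmb{q},k}$ and the $r$-denominators enter only afterwards through the square-root differentiation formula \eqref{eq:chain-rule-root}, combining with the lemma's $r^{-l}$ to give even powers $(2r)^{2k}$. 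You instead compose through $r$ itself, which obliges you to supply the auxiliary fact $\partial^{\pmb{\beta}}_{\pmb{x}} r = P_{\pmb{\beta}}(\pmb{x})/r^{2|\pmb{\beta}|-1}$ with $P_{\pmb{\beta}}$ polynomial; note that your parenthetical suggestion that this follows ``directly from \eqref{eq:chain-rule-root}'' is not quite right, since that formula is one-dimensional, but the induction you also propose does go through (differentiating $P/r^{2b-1}$ yields $[(\partial_{x_i}P)r^2-(2b-1)Px_i]/r^{2b+1}$, and $r^2$ is a polynomial). Granting that, your partition bookkeeping $\sum_{\pmb{\beta}}|\pmb{\beta}|\,\pi(\pmb{\beta})=|\pmb{q}|$ and $\sum_{\pmb{\beta}}\pi(\pmb{\beta})=k$ correctly collapses the denominator to $r^{2|\pmb{q}|-k}$, the second stage matches the paper's use of the lemma verbatim, and your final exponents $n_1=|\pmb{q}|$, $n_2=2|\pmb{q}|-1$ reproduce the paper's normalization factor $x_1^{2|\pmb{q}|-1}r^{2|\pmb{q}|}$. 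In short, your route trades the paper's intermediate function $f$ and formula \eqref{eq:chain-rule-root} for one extra induction on the derivatives of $r$; either yields a complete proof.
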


\begin{proof}
Let $x_2,\dots,x_d$ be fixed.
We can write $G(|\b{x}|)$ as the composition of $f(\xi) = G( \sqrt{\xi})$ for $\xi \in \R^{+}_0$ and $g(\b{z}) = z_1^2 + \dots z_d^2$ for $\b{z} = (z_1, \dots, z_d) \in \R^d$ so that $G(|\b{x}|) = (f \circ g)(\b{x})$. We can then apply \eqref{eq:multi-faa-di-bruno-2} to $\partial^{\b{q}}_{\b{x}} G(|\b{x}|) = \partial^{\b{q}}_{\b{x}} \left((f\circ g)(\b{x}) \right)$ and get
\begin{equation}
\label{eq:faa-di-bruno-applied-once}
\partial^{\b{q}}_{\b{x}} G(|\b{x}|) = \partial^{\b{q}}_{\b{x}} (f\circ g)(\b{x})  = \sum_{k=1}^{|\b{q}|} f^{(k)}(g(\b{x})) \sum_{\pi \in P_{\b{q},k}} \frac{\b{q}!}{\pi!} \prod_{\b{\beta} \leq \b{q}, |\b{\beta}| \leq |\b{q}|-k+1} \left( \frac{\partial^{\b{\beta}}_{\b{x}} g(\b{x})}{\b{\beta}!} \right)^{\pi(\b{\beta})}.
\end{equation}
Note that since $g(\b{x}) = x_1^2 + \dots + x_d^2$, $\partial^{\b{\beta}}_{\b{x}} g(\b{x})$ is polynomial in $x_1, \dots, x_d$.
Let
\begin{equation}
\label{eq:bqk-def}
b_{\b{q},k}(\b{x}) = \sum_{\pi \in P_{\b{q},k}} \frac{\b{q}!}{\pi!} \prod_{\b{\beta} \leq \b{q}, |\b{\beta}| \leq |\b{q}|-k+1} \left( \frac{\partial^{\b{\beta}}_{\b{x}} g(\b{x})}{\b{\beta}!} \right)^{\pi(\b{\beta})},
\end{equation}
which is a polynomial in $x_1, \dots, x_d$.
Then rewrite \eqref{eq:faa-di-bruno-applied-once} as
\begin{equation}
\label{eq:partial-q-x}
\partial^{\b{q}}_{\b{x}} G(|\b{x}|) = \sum_{k=1}^{|\b{q}|} f^{(k)}(g(\b{x})) b_{\b{q},k}(\b{x}).
\end{equation}
Next, we consider $f^{(k)}(g(\b{x}))$. If $\xi \in \R^{+}_0$, we can apply \eqref{eq:chain-rule-root} to $f^{(k)}(\xi)$ (recall that $f(\xi)= G(\sqrt{\xi})$) and obtain
\begin{equation}
\label{eq:applied-bruno-sqrt}
f^{(k)}(\xi) = \sum_{l=0}^k \frac{(-1)^{k-l}\risingfac{l}{2(k-l)}}{(k-l)!(2 \sqrt{\xi})^{2k-l}} G^{(l)}(\sqrt{\xi}).
\end{equation}
We define $r =\sqrt{ x_1^2 + \dots x_d^2}$ so that if we use the identity $g(\b{x}) = r^2$ and substitute \eqref{eq:applied-bruno-sqrt} into \eqref{eq:partial-q-x} along with $\xi = r^2$, we find
\begin{equation}
\label{eq:partialqxbulky}
\partial^{\b{q}}_{\b{x}} G(|\b{x}|) =\sum_{k=1}^{|\b{q}|} \sum_{l=0}^k \frac{(-1)^{k-l}\risingfac{l}{2(k-l)}}{(k-l)!(2 r)^{2k-l}} G^{(l)}(r) b_{\b{q},k}(\b{x}) .
\end{equation}
We have now translated all derivatives in $x_1, \dots, x_d$ into derivatives in $r$. Next, we will translate the derivatives in $r$ to derivatives in $x_1$. Define
\begin{align*}
  \tilde G &:  x_1 \mapsto G(|[x_1,x_2,\dots,x_d]^T|),\\
  \eta &: r \mapsto x_1(r).
\end{align*}
Then apply \eqref{eq:multi-faa-di-bruno-2} to the composition $\tilde G \circ \eta$:
\begin{align}
\label{eq:glr}
\partial^l_r G(r)  =
(\tilde G \circ \eta)^{(l)}(r)
&= \sum_{m=1}^l \tilde G^{(m)}(\eta(r)) \sum_{\pi \in P_{l,m}} \frac{l!}{\pi!} \prod_{1 \leq j \leq l-m+1} \left( \frac{\eta^{(j)}(r)}{j!} \right)^{\pi(j)}\\
&= \sum_{m=1}^l \partial^m_{x_1}G(|\b{x}|) \sum_{\pi \in P_{l,m}} \frac{l!}{\pi!} \prod_{1 \leq j \leq l-m+1} \left( \frac{\partial^j_r x_1}{j!} \right)^{\pi(j)}.\notag
\end{align}
Using the result of Lemma~\ref{lem:dx1dr} by substituting \eqref{eq:partialjrprodsimp} into \eqref{eq:glr} gives
\begin{align}
\partial^l_r G(r)
&= \sum_{m=1}^l \partial^m_{x_1}G(|\b{x}|) \sum_{\pi \in P_{l,m}} \frac{l!}{\pi!} \frac{1}{r^l x_1^{2l-m}} e_{\pi,l}(\b{x})\notag\\
&=  \frac{1}{r^l x_1^{2l-1}}\sum_{m=1}^l \partial^m_{x_1}G(|\b{x}|) \sum_{\pi \in P_{l,m}} \frac{l!}{\pi!} x_1^{m-1} e_{\pi,l}(\b{x}).
\label{eq:glrfinalsimp}
\end{align}
Let
\begin{equation}
\label{eq:slm-def}
s_{l,m}(\b{x})=\sum_{\pi \in P_{l,m}} \frac{l!}{\pi!} x_1^{m-1} e_{\pi,l}(\b{x}),
\end{equation}
where $s_{l,m}(\b{x})$ is a polynomial in $x_1, \dots, x_d$ (since $m \geq 1$ ensures $x_1^{m-1}$ has a non-negative exponent). Substitute \eqref{eq:glrfinalsimp} into \eqref{eq:partialqxbulky} to obtain
\[
\partial^{\b{q}}_{\b{x}} G(|\b{x}|) =\sum_{k=1}^{|\b{q}|} \sum_{l=0}^k \frac{(-1)^{k-l}\risingfac{l}{2(k-l)}}{(k-l)!(2 r)^{2k-l}} \frac{1}{r^l x_1^{2l-1}}\sum_{m=1}^l \partial^m_{x_1}G(|\b{x}|) s_{l,m} (\b{x}) b_{\b{q},k}(\b{x}) .
\]
Combining powers of $r$ gives
\begin{equation}
\label{eq:final1}
\partial^{\b{q}}_{\b{x}} G(|\b{x}|) =\sum_{k=1}^{|\b{q}|}\frac{b_{\b{q},k}(\b{x}) }{(2 r)^{2k}} \sum_{l=0}^k \frac{(-1)^{k-l}\risingfac{l}{2(k-l)} \, 2^l}{(k-l)!} \frac{1}{x_1^{2l-1}}   \sum_{m=1}^l \partial^m_{x_1}G(|\b{x}|) s_{l,m} (\b{x}),
\end{equation}
recalling the definitions of $b_{\b q,k}$ from \eqref{eq:bqk-def} and $s_{l,m}$ from \eqref{eq:slm-def}.

Upon multiplication of the factor $x_1^{2|\b{q}|-1} r^{2|\b{q}|}$, \eqref{eq:final1} contains no rational coefficients, since all powers of $r$ present are even and $b_{\b{q},k}(\b{x}), s_{l,m} (\b{x}), w_{j}(\b{x})$ are multivariate polynomials in $x_1, \dots, x_d$. Thus it is possible to rewrite \eqref{eq:PDE} as an ODE in $x_1$ with coefficients that are multivariate polynomials in $x_1, \dots, x_d$, if we normalize by multiplying by $x_1^{2|\b{q}|-1} r^{2|\b{q}|} $.
\end{proof}

\subsection{ODE to Large-$|x_1|$ Recurrence}
\label{sec:large-x1}

The goal of this section is to show that if we have an expression with coefficients in $\C[x_1, \dots, x_d]$ and partial derivatives of $G(\b{x})$ with respect to $x_1$, then we can derive a recurrence for the derivatives of $G(\b{x})$ with respect to $x_1$. Foreshadowing some later developments, we call this recurrence the large-$|x_1|$ recurrence because we observe it is numerically stable in a cone around the $x_1$-axis (see empirical results in Section~\ref{sec:error-modeling} and Figure~\ref{fig:my_label}).

\begin{myexample}{Large-$|x_1|$ Recurrence for Laplace in 2D}{laplace-large-x1}
We resume from Example~\ref{myex:laplace-ode} by recalling the ODE \eqref{eq:normalized-laplace} satisfied by the first and second derivatives of $G(|\b{x}|)$ with respect to $x_1$:
\[
(x_1^3 + x_1 x_2^2)\partial^2_{x_1} G(|\b{x}|) + (x_1^2 - x_2^2) \partial_{x_1} G(|\b{x}|)  = 0.
\]
Our goal is to generalize this ODE to arbitrary-order derivatives. To this end we consider taking one derivative of \eqref{eq:normalized-laplace} (i.e. applying $\partial^1_{x_1}$):
\[
(x_1^3 + x_1 x_2^2)  \partial^3_{x_1} G(|\b{x}|) + 4x_1^2 \partial^2_{x_1} G(|\b{x}|) + 2x_1 \partial_{x_1} G(|\b{x}|) = 0.
\]
With two derivatives (i.e. applying $\partial^2_{x_1}$):
\[
(x_1^3 + x_1 x_2^2) \partial^4_{x_1} G(|\b{x}|) + (7x_1^2 + x_2^2) \partial^3_{x_1} G(|\b{x}|) + 10 x_1 \partial^2_{x_1} G(|\b{x}|) + 2 \partial_{x_1} G(|\b{x}|) = 0.
\]
Lastly, an inductive argument shows the following order-parametric ODE holds (i.e. applying $\partial^n_{x_1}$):
\begin{equation}
\label{eq:laplace2dlarge-x1}
\begin{aligned}
(x_1^3 + x_1 x_2^2) \partial^{n+2}_{x_1} G(|\b{x}|) + ([3n + 1] x_1^2 + [n-1] x_2^2) \partial^{n+1}_{x_1} G(|\b{x}|)\\ +([3n^2 - n] x_1)\partial^{n}_{x_1} G(|\b{x}|) + n(n-1)^2 \partial^{n-1}_{x_1} G(|\b{x}|)= 0.
\end{aligned}
\end{equation}
Observe that \eqref{eq:laplace2dlarge-x1}  provides a recurrence for the derivatives of $G$ that we call the large-$|x_1|$ recurrence for the Laplace PDE in 2D.
\end{myexample}

\begin{figure}[!ht]
\centering
\begin{tikzpicture}[scale=0.8]
\fill [color=black!20](0,0) -- (70:5) -- ++(2.5,0) -- ($(-70:5) + (2.5,0) $)-- ++(-2.5,0) --cycle;
\fill [color=black!20](0,0) -- (110:5) -- ++(-2.5,0) -- ($(-110:5) + (-2.5,0) $)-- ++(2.5,0) --cycle;
\draw [->, thick] (-5,0) -- (5,0) node [anchor=north,pos=1] {${x}_1$};
\draw [->, thick] (0,-5) -- (0,5) node [anchor=west,pos=1] {${x}_2$};
\foreach\rad in {1,2,3,4}
  \draw  [ dashed](0,0) circle (\rad);
\node [anchor=south] at (0,-4) {Unstable};
\node [anchor=west] at (2,2) {Stable};

\draw[ thick, blue] (0,4.5) cos (0.5,3);
\draw[thick, blue] (0,4.5) cos (-0.5,3);
\node at (0.25, 3)[circle,fill,inner sep=1pt]{};
\draw [->, thick] (-1, 3) -- (1,3);
\node [anchor=north] at (0.25,3) {$\b{x}$};

\end{tikzpicture}

\caption{The shaded region is where the large-$|x_1|$ recurrence is numerically stable. Since $G(|\b{x}|)$ is an even function along $\hat{x}_1$, the odd-even derivatives fluctuate in magnitude near $x_1 = 0$.}
\label{fig:my_label}
\end{figure}

In generalizing Example~\ref{myex:laplace-large-x1} to arbitrary $G$, consider the order $a$ ODE ($a \in \N$) satisfied by $G(|{\b x}|)$
\begin{equation}
\label{eq:ode}
\sum_{i=0}^a l_i({\b x}) \partial^i_{x_1} G(|{\b x}|) = 0 , \qquad l_a(\b{x}) \ne 0,
\end{equation}
where $l_i({\b x})$ is a multivariate polynomial in $x_1, \dots, x_d$. Then expand $l_i(\b{x})$ in $x_1$ to obtain
\begin{equation}
\label{eq:odecoeffexp}
l_i({\b x}) = q_{i0}(x_2, \dots, x_d) + \dots +  q_{ih}(x_2, \dots, x_d) x_1^h = \sum_{j=0}^{h} q_{ij}(x_2, \dots, x_d) x_1^j,
\end{equation}
where 
$h$ is the highest power of $x_1$ present in the ODE, and $q_{ij}(x_2, \dots, x_d)$ is a multivariate polynomial in $x_2, \dots, x_d$, possibly 0. Substituting \eqref{eq:odecoeffexp} into \eqref{eq:ode} gives
\begin{equation}
\label{eq:odeasmono}
\sum_{i=0}^a  \sum_{j=0}^{h} q_{ij}(x_2, \dots, x_d) x_1^j \partial^i_{x_1} G(|{\b x}|) = 0.
\end{equation}
Let $n \in \N_0$ and apply $\partial^n_{x_1}$ to both sides of \eqref{eq:odeasmono}
\[
\sum_{i=0}^a  \sum_{j=0}^{h} q_{ij}(x_2, \dots, x_d) \partial^n_{x_1} \left( x_1^j \partial^i_{x_1} G(|{\b x}|) \right) = 0.
\]
Applying \eqref{eq:productrule1} gives
\begin{equation}
  \sum_{i=0}^a  \sum_{j=0}^{h} \sum_{l=0}^j
  \underbrace{q_{ij}(x_2, \dots, x_d) \left(\prod^{l-1}_{k=0}(n-k)\right) \binom{j}{l} x_1^{j-l}}_{u_{ijl}(\b x):=} \partial^{n-l+i}_{x_1} G(|{\b x}|)= 0,
\end{equation}
which gives an expression for a recurrence of derivatives of $G(|\b{x}|)$ with respect to $x_1$, summarized as
\begin{equation}
\label{eq:large-x1-recur}
\sum_{i=0}^a  \sum_{j=0}^{h}  \sum_{l=0}^j u_{ijl}(\b{x}) \partial^{n-l+i}_{x_1} G(|{\b x}|)= 0 \qquad  (n \in \N_{\geq 0}).
\end{equation}
We call \eqref{eq:large-x1-recur} the \emph{large-$|x_1|$ recurrence}. The order of the recurrence can at most be bounded by $a+h$, where $a$ is the ODE order and $h$ is the highest power of $x_1$ present in \eqref{eq:ode}. 


\subsection{Small-$|x_1|$ Expansion}
\label{sec:small-x1}
The large-$|x_1|$ recurrence \eqref{eq:large-x1-recur} incurs rounding error when
\begin{equation}
\label{eq:cone-small-x1}
\frac{|x_1|}{\overline{x}} \ll 1,
\qquad\text{where}\qquad
\overline{x} = \sqrt{x_2^2 + \dots + x_d^2}.
\end{equation}
$G$, in the region where \eqref{eq:large-x1-recur} holds, is an even function of $x_1$. As a result, in a cone around $x_1=0$, as shown in Figure~\ref{fig:my_label}, odd-order derivatives will almost vanish, while even-order derivatives will not, leading to considerable differences in their magnitude.
In this region, the recurrence \eqref{eq:large-x1-recur} may thus involve terms of strongly varying magnitude, inviting rounding error. Indeed, this rounding error is large enough to entirely destroy the accuracy of the computed derivatives at moderate orders. To obviate this issue, we develop what we call the small-$|x_1|$ expansion.

\begin{myexample}{Small-$|x_1|$ Expansion for Laplace in 2D}{small-x1-laplace}
We continue from Example~\ref{myex:laplace-large-x1}. Suppose we start with the large-$|x_1|$ recurrence for the Laplace PDE in 2D
\[
\begin{aligned}
(x_1^3 + x_1 x_2^2) \partial^{n+2}_{x_1} G(|\b{x}|) + ([3n + 1] x_1^2 + [n-1] x_2^2) \partial^{n+1}_{x_1} G(|\b{x}|)\\ +([3n^2 - n] x_1)\partial^{n}_{x_1} G(|\b{x}|) + n(n-1)^2 \partial^{n-1}_{x_1} G(|\b{x}|)= 0.
\end{aligned}
\]
We perform the substitution $x_1 = 0$ to get
\begin{equation}
\label{eq:large-x1-recur-laplace}
\left(\partial^n_{x_1} G(|\b{x}|) \right)|_{x_1 = 0} = \frac{-(n-2)(n-1)\left(\partial^{n-2}_{x_1} G(|\b{x}|)\right)_{x_1=0}}{x_2^2}.
\end{equation}
Since $G(|\b x|)$ is an even function, the substitution eliminates `odd-order' terms, lowering computational expense. We call \eqref{eq:large-x1-recur-laplace} the small-$|x_1|$ recurrence for the Laplace PDE in 2D.

Note that the small-$|x_1|$ recurrence is NOT in itself a recurrence for the derivatives of $G(|\b{x}|)$, it is a recurrence for the derivatives evaluated at $x_1 = 0$: $\left(\partial^n_{x_1} G(|\b{x}|) \right)|_{x_1 = 0}$. To actually compute the derivatives, we use a Taylor expansion of $\partial^n_{x_1} G(|\b{x}|)$, centered around $x_1=0$, with respect to $x_1$.  Using a truncation order of, say $p_{\text{small-$|x_1|$}} = 2$, we obtain a third-order approximation of the derivative:\[
\partial^n_{x_1} G(|\b{x}|) = \partial^n_{x_1}G(|\b{x}|)|_{x_1=0}+  \partial^{1}_{x_1} \partial^n_{x_1}G(|\b{x}|)|_{x_1=0} x_1 + \partial^{2}_{x_1} \partial^n_{x_1}G(|\b{x}|)|_{x_1=0} \frac{x_1^2}{2} + O(x_1^3)
\]
Because $G(|\b x|)$ is an even function, all the `net-odd-order' (i.e., considering $n$) terms are zero. We can thus avoid evaluating the (numerically troublesome) odd-order derivatives, avoiding a source of inaccuracy that beset our large-$|x_1|$ recurrence.
\end{myexample}

The first step is to recognize that the quantities of interest $\{\partial^n_{x_1} G(|\b{x}|)\}_{n=0}^{\infty}$ can be expanded in a Taylor series around $x_1 = 0$:
\begin{equation}
\label{eq:taylor-rep-derivs}
\partial^n_{x_1} G(|\b{x}|) = \sum_{s=0}^{\infty} \left( \partial^{s+n}_{x_1} G(|\b{x}|) \right)|_{x_1=0} \frac{x_1^s}{s!}.
\end{equation}
If we have a numerically stable recurrence for $\{(\partial^k_{x_1} G(|\b{x}|))|_{x_1=0}\}|_{k=0}^{\infty}$ with negligible rounding error, then we can compute \eqref{eq:taylor-rep-derivs} to high accuracy cheaply by truncating our representation at $p_{\text{small-$|x_1|$}} \in \N$:
\begin{equation}
\label{eq:taylor-trunc}
\partial^n_{x_1} G(|\b{x}|) = \sum_{s=0}^{p_{\text{small-$|x_1|$}}} \left( \partial^{s+n}_{x_1} G(|\b{x}|) \right)|_{x_1=0} \frac{x_1^s}{s!} + \left(\partial^{p_{\text{small-$|x_1|$}}+n+1}_{x_1} G(|\b{x}|)\right)_{\b{x}=\b{t}_{\xi}} \frac{x_1^{p_{\text{small-$|x_1|$}}+1}}{(p_{\text{small-$|x_1|$}}+1)!}
\end{equation}
where $\b{t}_{\xi}= (\xi, x_2, \dots, x_d)$. The recurrence for $\{(\partial^i_{x_1} G(|\b{x}|))_{x_1=0}\}|_{i=0}^{\infty}$ can be obtained by substitution of $x_1=0$ into our large-$|x_1|$ recurrence \eqref{eq:large-x1-recur}, resulting in
\begin{equation}
\label{eq:small-x1-1}
\sum_{i=0}^a  \sum_{j=0}^{h}  \sum_{l=0}^j u_{ijl}(\b{x})|_{x_1=0} \left(\partial^{n-l+i}_{x_1} G(|{\b x}|) \right)|_{x_1=0}= 0 \qquad  (n \in \N_{0}).
\end{equation}
We call \eqref{eq:small-x1-1} the \emph{small-$|x_1|$ recurrence}. Note that the small-$|x_1|$ recurrence is \emph{not} a recurrence for the derivatives of $\{\partial^i_{x_1} G(|\b{x}|)\}_{i=0}^\infty$, but instead a recurrence for the sequence: $\{\left(\partial^i_{x_1} G(|\b{x}|) \right)_{x_1=0}\}_{i=0}^\infty$. Once elements in the sequence $\{\left(\partial^i_{x_1} G(|\b{x}|) \right)_{x_1=0}\}_{i=0}^\infty$ are computed, they can be substituted into $\eqref{eq:taylor-trunc}$ to obtain an approximation to the set of derivatives we want: $\{\partial^i_{x_1} G(|\b{x}|)\}_{i=0}^\infty$, an approximation that has a well-understood truncation error profile with respect to $p_{\text{small-$|x_1|$}}$. 

Since $G(|\b{x}|)$ is an even function of $x_1$, $(\partial^k_{x_1} G(|\b{x}|))|_{x_1=0}=0$ when $k$ is odd. As a result, the small-$|x_1|$ recurrence avoids the effect of even-odd oscillations. We observe that the small-$|x_1|$ expansion is numerically stable for all orders with negligible rounding error (see Section~\ref{sec:error-modeling} for error modeling). Furthermore, the small-$|x_1|$ recurrence has recurrence order less than or equal to that of the large-$|x_1|$ recurrence. This is because in the small-$|x_1|$ recurrence, odd-order terms must be zero, and therefore the recurrence order is typically strictly less than that of the large-$|x_1|$ recurrence.

\subsection{Recurrence Error Modeling}
\label{sec:error-modeling}
So far, we have developed two approaches for the evaluation of derivatives that can be produced for an arbitrary kernel: one that uses the large-$|x_1|$ recurrence and one that uses the small-$|x_1|$ expansion. We present error models for both approaches in this section.

\begin{figure}[h]
\centering
\includegraphics{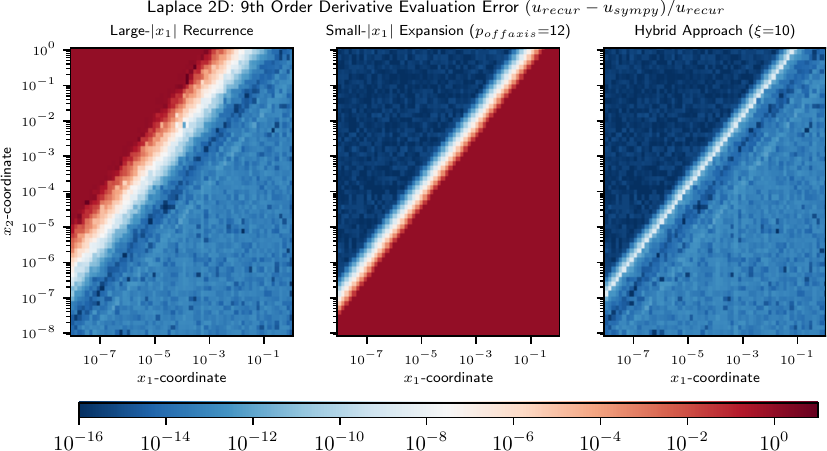}
\caption{The relative error in evaluating the 9th derivative of the Green's function for the Laplace PDE in 2D using recurrences ($u_{recur}$) versus using the symbolic calculator SymPy ($u_{sympy}$) \cite{10.7717/peerj-cs.103} for different evaluation points. The large-$|x_1|$/small-$|x_1|$ recurrence alternates use of the large-$|x_1|$ and small-$|x_1|$ recurrence above/below the line $x_2 = \xi x_1$ for $\xi =10$.}
\label{figure:Laplace2D9thOrder}
\end{figure}

\begin{figure}[h]
\centering
\includegraphics{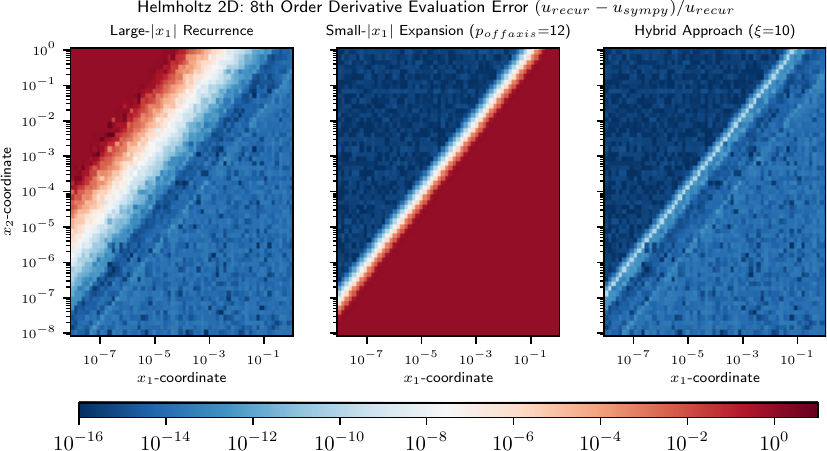}
\caption{The relative error in evaluating the 8th derivative of the Green's function for the Helmholtz PDE in 2D using recurrences ($u_{recur}$) versus using the symbolic calculator SymPy ($u_{sympy}$) \cite{10.7717/peerj-cs.103} for different evaluation points. The large-$|x_1|$/small-$|x_1|$ recurrence alternates use of the large-$|x_1|$ and small-$|x_1|$ recurrence above/below the line $x_2 = \xi x_1$ for $\xi = 10$.}
\label{figure:Helmholtz2D8thOrder}
\end{figure}

\begin{figure}[h]
\centering
\includegraphics{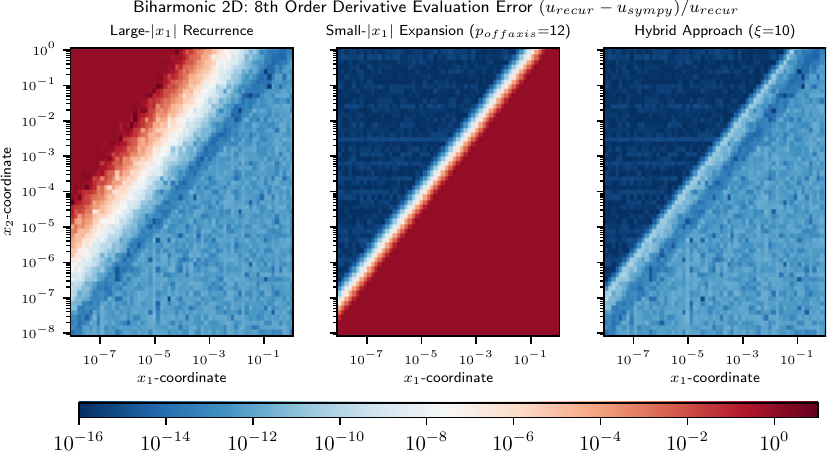}
\caption{The relative error in evaluating the 8th derivative of the Green's function for the Biharmonic PDE in 2D using recurrences ($u_{recur}$) versus using the symbolic calculator SymPy ($u_{sympy}$) \cite{10.7717/peerj-cs.103} for different evaluation points. The large-$|x_1|$/small-$|x_1|$ recurrence alternates use of the large-$|x_1|$ and small-$|x_1|$ recurrence above/below the line $x_2= \xi x_1$ for $\xi = 10$.}
\label{figure:Biharmonic2D7thOrder}
\end{figure}

\subsubsection{Large-$|x_1|$ Recurrence Error Modeling}
\label{sec:large-x1-error-model}
The large-$|x_1|$ recurrence in \eqref{eq:large-x1-recur} incurs uncontrolled rounding error when $|x_1| \ll \overline{x}$. See Figures \ref{figure:Laplace2D9thOrder}--\ref{figure:Biharmonic2D7thOrder} for concrete examples of this phenomenon in 2D, specifically how the large-$|x_1|$ recurrence performs poorly above the line $x_2 = \xi x_1$ for $\xi = 10$. 

We begin by obtaining an approximate bound for the recurrence error at each step. Recall that the recurrence formula for our derivatives is of the general form
\begin{align}
\label{eq:orderrderv}
\partial^n_{x_1} G(|\b{x}|)
&= a_{n-1}(\b{x}) \partial^{n-1}_{x_1} G(|\b{x}|) + \dots + a_{n-r}(\b{x})\partial^{n-r}_{x_1} G(|\b{x}|)\\
&= b_{n-1}(\b{x}) + \dots + b_{n-r}(\b{x}),
\end{align}
where $r \in \N$ is the recurrence order and $b_{n-i}(\b{x}) =  a_{n-i}(\b{x}) \partial^{n-i}_{x_1} G(|\b{x}|)$ $(i \in \{1, \dots, r\})$. Let $\oplus$ represent addition with rounding. Then the relative rounding error present in evaluation of $\partial^n_{x_1} G(|\b{x}|)$ in a single recurrence step is given by
\begin{equation}
\label{eq:crude-additions}
E:=|\partial^n_{x_1} G(|\b{x}|) - (b_{n-1}(\b{x}) \oplus \dots \oplus b_{n-r}(\b{x}))|/|b_{n-1}(\b{x}) \oplus \dots \oplus b_{n-r}(\b{x})|
\end{equation}
for some $\b{x} \in \R^d$. Under certain assumptions discussed below, an approximate bound for this is given by
\begin{equation}
\label{eq:crude-error}
E \lessapprox \max_{i,j \in \{1, \dots, r\}} \frac{|b_{n-i}(\b{x})|}{|b_{n-j}(\b{x})|}.
\end{equation}
To derive this approximate bound, suppose we are given two (already rounded) numbers $x, y\in\mathbb R$ where $|y| \ll |x|$.  Then we assume that the absolute rounding error present in $x \oplus y$ can be coarsely bounded by
\[
|(x\oplus y) - (x+y)|  \lessapprox |y|
\]
and thus the relative rounding error can be coarsely approximated by
\[
\frac{|(x\oplus y) - (x+y)|}{|x+y|}  \lessapprox \frac{|y|}{|x+y|}\approx \frac{|y|}{|x|}.
\]

Applying the two-term rounding estimate to the summation
$b_{n-1}(\b{x}) \oplus \cdots \oplus b_{n-r}(\b{x})$ in \eqref{eq:crude-additions}, we take a pairwise worst-case view: we assume the
overall relative rounding error in this recurrence step is approximately upper-bounded by the largest
relative error that could occur in adding any two contributions, which leads to the estimate
\[E \lessapprox \max_{i,j\in\{1,\dots,r\}} |b_{n-i}(\b{x})|/|b_{n-j}(\b{x})|\]
in \eqref{eq:crude-error}. We note that this merely estimates the maximal rounding error
modeled via \eqref{eq:crude-error} a result of pairwise addition and neglects subsequent
additions occurring on intermediate results.

We now apply the approximate single-step estimate \eqref{eq:crude-error} to the 2D Laplace
large-$|x_1|$ recurrence by matching the generic notation \eqref{eq:orderrderv} to a specific
recurrence instance.  Starting from the large-$|x_1|$ recurrence in Example~2 \eqref{eq:laplace2dlarge-x1},
we substitute $n=7$, which is the instance that produces $\partial_{x_1}^9 G(|\b{x}|)$ as a linear
combination of $\partial_{x_1}^8 G$, $\partial_{x_1}^7 G$, and $\partial_{x_1}^6 G$:
\begin{equation}
\label{eq:step1}
\partial^{9}_{x_1} G(|\b{x}|)
= a_{8}(\b{x})\,\partial^{8}_{x_1} G(|\b{x}|)
+ a_{7}(\b{x})\,\partial^{7}_{x_1} G(|\b{x}|)
+ a_{6}(\b{x})\,\partial^{6}_{x_1} G(|\b{x}|).
\end{equation}
Comparing \eqref{eq:step1} with \eqref{eq:orderrderv} shows that this recurrence step has order $r=3$,
with addends
\begin{align*}
b_{8}(\b{x}) &:= a_{8}(\b{x})\,\partial^{8}_{x_1} G(|\b{x}|),\\
b_{7}(\b{x}) &:= a_{7}(\b{x})\,\partial^{7}_{x_1} G(|\b{x}|),\\
b_{6}(\b{x}) &:= a_{6}(\b{x})\,\partial^{6}_{x_1} G(|\b{x}|).
\end{align*}
Therefore, \eqref{eq:crude-error} specializes to
\[
E \lessapprox \max_{p,q\in\{6,7,8\}} \frac{|b_p(\b{x})|}{|b_q(\b{x})|}
\]
for the evaluation of \eqref{eq:step1}.  To evaluate this quantity in the regime where the
large-$|x_1|$ recurrence exhibits loss of accuracy, we set $\b{x}=(x_1,\overline{x})$ and substitute
the exact symbolic expressions for $\partial_{x_1}^6G$, $\partial_{x_1}^7G$, and $\partial_{x_1}^8G$
into $b_6,b_7,b_8$.  Simplifying and expanding for $|x_1|/\overline{x}\to 0$ shows that $b_6$ and $b_8$
scale like $O((\overline{x}/x_1)^{-8})$ while $b_7$ is smaller, $O((\overline{x}/x_1)^{-10})$, so the
maximum is dominated by ratios such as $|b_6|/|b_7|$ (or $|b_8|/|b_7|$). In particular,
\begin{equation}
\label{eq:large-x1-9}
\max_{p,q\in\{6,7,8\}} \frac{|b_p(\b{x})|}{|b_q(\b{x})|}
=O\!\left(\left(\frac{|x_1|}{\overline{x}}\right)^{-2}\right)
\end{equation}
as $|x_1|/\overline{x} \to 0$.
\begin{figure}[h]
\centering
\includegraphics{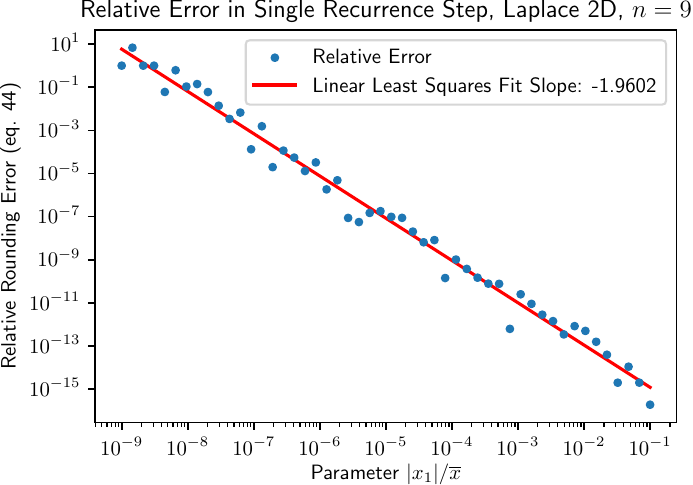}
\caption{We plot the relative error due to rounding in the large-$|x_1|$ recurrence, versus $|x_1|/\overline{x}$, for a single recurrence step, assuming our preceding terms are given exactly with no rounding error. We sample a set of 5 random points for each $|x_1|/\overline{x}$ value.}
\label{fig:llf}
\end{figure}
The error model in \eqref{eq:large-x1-9} is for a single recurrence step, however it implies that the ``compound" rounding error from the application of multiple recurrence steps should also only depend primarily on $|x_1|/\overline{x}$.

\textbf{Empirical validation.} Figure~\ref{figure:Laplace2D9thOrder} shows a heat-map for the error in the large-$|x_1|$ recurrence. The observed behavior appears to support that $|x_1|/\overline{x}$ is the primary factor determining the error.

Next, we carry out an experiment to validate the exponent in the asymptotics of \eqref{eq:large-x1-9} as $|x_1|/\overline{x} \to \infty$. In the experiment, carried out for the Laplace PDE in 2D, we evaluate $\partial^n_{x_1} G(|\b{x}|)$ for $n = 9$, assuming $\partial^{n-1}_{x_1} G(|\b{x}|), \dots, \partial^{n-r}_{x_1} G(|\b{x}|)$ are known exactly. To this end, we generate a random set of points in $\R^2$ such that they have values of $|x_1|/\overline{x} = 10^0, 10^1, \dots, 10^9$. We choose five random points per chosen value of $|x_1|/\overline{x}$. We then plot the relative error given by \eqref{eq:crude-additions} versus the parameter $|x_1|/\overline{x}$, and label the slope for the linear least squares fit. The results are shown in Figure~\ref{fig:llf}, which exhibits good agreement with the exponent in \eqref{eq:large-x1-9}.

\subsubsection{Error Modeling for the Small-$|x_1|$ Expansion}
As we have seen, rounding error appears to dominate in the case of the large-$|x_1|$ recurrence. For the small-$|x_1|$ expansion, by contrast,
truncation error turns out to be the dominant source of error.

Using a remainder for the Taylor expansion in \eqref{eq:taylor-trunc} yields an asymptotic estimate for the truncation error in that expansion:
\begin{equation}
\label{eq:asymptotic-error-71}
\left|\partial^n_{x_1} G(|\b{x}|) - \sum_{s=0}^{p_{\text{small-$|x_1|$}}} \left( \partial^{s+n}_{x_1} G(|\b{x}|) \right)|_{x_1=0} \frac{x_1^s}{s!} \right| \leq \max_{0\leq \xi\leq x_1} \left|\left(\partial^{p_{\text{small-$|x_1|$}}+n+1}_{x_1}G(|\b{x}|)\right)_{\b{x}=\b{t}_\xi}\right| \frac{ x_1^{p_{\text{small-$|x_1|$}}+1}}{(p_{\text{small-$|x_1|$}}+1)!},
\end{equation}
where $\b{t}_\xi = (\xi, x_2, \dots, x_d)$. Realize that in the sum
\begin{equation}
\label{eq:series-under-consider}
\sum_{s=0}^{p_{\text{small-$|x_1|$}}} \left( \partial^{s+n}_{x_1} G(|\b{x}|) \right)|_{x_1=0} \frac{x_1^s}{s!}
\end{equation}
only the terms where $s+n$ are even are non-zero, based on the same even-odd effect described above. This implies that we can assume WLOG that $p_{\text{small-$|x_1|$}}$ is chosen such that $p_{\text{small-$|x_1|$}}+n$ is even. This ensures that the final term in \eqref{eq:series-under-consider} is non-zero. Furthermore, this implies that $p_{\text{small-$|x_1|$}}+n+1$ is odd, which is relevant since the term
\[
\left(\partial^{p_{\text{small-$|x_1|$}}+n+1}_{x_1}G(|\b{x}|)\right)_{\b{x}=\b{t}_\xi}
\]
appears in the RHS of \eqref{eq:asymptotic-error-71}. We now detail a list of assumptions that can be used to construct a more specific relative error bound for the evaluation of $\partial^n_{x_1} G(|\b{x}|)$ using the small-$|x_1|$ recurrence.

In bounding truncation error, we make use of some assumptions on the growth of Green's function derivatives.
\begin{assumption}
\label{ass:deriv-growth-odd}
Let $\b{x} \in \R^d \setminus \{\b{0}\}$. If $|x_1|/\overline{x} < 1$, $\mu \in \N$ is odd, then there exists an $M_\mu \in \R_{>0}$ such that
\begin{equation}
\max_{0\leq \xi\leq x_1} \left|\left(\partial^{\mu}_{x_1}G(|\b{x}|)\right)_{\b{x}=\b{t}_\xi}\right| \leq M_\mu\frac{|x_1|}{\overline{x}^{\mu+1}}.
\end{equation}
\end{assumption}
\begin{assumption}
\label{ass:deriv-growth-odd-even}
Let $\b{x} \in \R^d \setminus \{\b{0}\}$. If $|x_1|/\overline{x} < 1$ and $\mu \in \N$ is odd then there exists an $m_\mu \in \R_{>0}$ such that
\[
\max_{0\leq \xi\leq x_1} \left|\left(\partial^{\mu}_{x_1}G(|\b{x}|)\right)_{\b{x}=\b{t}_\xi}\right| \geq m_\mu\frac{|x_1|}{\overline{x}^{\mu+1}}.
\]
If $\nu \in \N$ is even, $|x_1|/\overline{x} < 1$ then there exists an $m_\nu \in \R_{>0}$ such that
\[
\max_{0\leq \xi\leq x_1} \left|\left(\partial^{\nu}_{x_1}G(|\b{x}|)\right)_{\b{x}=\b{t}_\xi}\right| \geq m_\nu\frac{1}{\overline{x}^{\nu}}.
\]
\end{assumption}

\begin{assumption}
\label{ass:deriv-growth-cubed-odd}
Let $\b{x} \in \R^d \setminus \{\b{0}\}$. If $|x_1|/\overline{x} < 1$, $\mu \in \N$ is odd, then there exists an $M_\mu \in \R_{>0}$ so that
\[
\max_{0\leq \xi\leq x_1} \left|\left(\partial^{\mu}_{x_1}G(|\b{x}|)\right)_{\b{x}=\b{t}_\xi}\right| \leq M_\mu\frac{|x_1|^{3}}{\overline{x}^{\mu+1}}.
\]
for all $\b{x} \in \R^d$.
\end{assumption}

\begin{assumption}
\label{ass:deriv-growth-cubed-odd-even}
Let $\b{x} \in \R^d \setminus \{\b{0}\}$. If $|x_1|/\overline{x} < 1$ and $\mu \in \N$ is odd then there exists an $m_\mu \in \R_{>0}$ such that
\[
\max_{0\leq \xi\leq x_1} \left|\left(\partial^{\mu}_{x_1}G(|\b{x}|)\right)_{\b{x}=\b{t}_\xi}\right| \geq m_\mu\frac{|x_1|^3}{\overline{x}^{\mu+1}}.
\]
If $\nu \in \N$ is even, and $|x_1|/\overline{x} < 1$, then there exists an $m_\nu \in \R_{>0}$ such that
\[
\max_{0\leq \xi\leq x_1} \left|\left(\partial^{\nu}_{x_1}G(|\b{x}|)\right)_{\b{x}=\b{t}_\xi}\right| \geq m_\nu\frac{|x_1|^2}{\overline{x}^{\nu}}.
\]
\end{assumption}
Figures \ref{fig:fig10} and \ref{fig:fig11} in Appendix~\ref{app:assumption-numerics} support the assertion that Assumption~\ref{ass:deriv-growth-odd} and \ref{ass:deriv-growth-odd-even} hold for Laplace/Helmholtz PDE in 2D and that Assumption~\ref{ass:deriv-growth-cubed-odd} and \ref{ass:deriv-growth-cubed-odd-even} hold for the Biharmonic PDE in 2D, by empirically validating the assumption when $\mu=5$ and $\nu=6$.
When Assumption~\ref{ass:deriv-growth-odd} holds, \eqref{eq:asymptotic-error-71} tells us that the small-$|x_1|$ absolute error for Laplace/Helmholtz 2D can be written in the form:
\begin{equation}
\label{eq:abs-error}
\max_{0\leq \xi\leq x_1} \left|\left(\partial^{p_{\text{small-$|x_1|$}}+n+1}_{x_1}G(|\b{x}|)\right)_{\b{x}=\b{t}_\xi}\right| \frac{ x_1^{p_{\text{small-$|x_1|$}}+1}}{(p_{\text{small-$|x_1|$}}+1)!} \leq   \frac{M_{p_{\text{small-$|x_1|$}}+n+1}}{(p_{\text{small-$|x_1|$}}+1)!}\frac{|x_1|^{p_{\text{small-$|x_1|$}}+2}}{\overline{x}^{n+p_{\text{small-$|x_1|$}}+2}}.
\end{equation}
When $n$ is even, Assumption~\ref{ass:deriv-growth-odd-even} implies that the relative error when $|x_1|/\overline{x} < 1$ is bounded and given by:
\[
\left|\partial^n_{x_1} G(|\b{x}|) - \sum_{s=0}^{p_{\text{small-$|x_1|$}}} \left( \partial^{s+n}_{x_1} G(|\b{x}|) \right)|_{x_1=0} \frac{x_1^s}{s!} \right| / \left| \partial^n_{x_1} G(|\b{x}|)  \right| \leq \frac{M_{p_{\text{small-$|x_1|$}}+n+1}}{m_n(p_{\text{small-$|x_1|$}}+1)!} \left(\frac{|x_1|}{\overline{x}}\right)^{p_{\text{small-$|x_1|$}}+2}.
\]
When $n$ is odd, Assumption~\ref{ass:deriv-growth-odd-even} implies that the relative error when $|x_1|/\overline{x} < 1$ is bounded and given by:
\begin{equation}
\label{eq:odd-rel-error}
\left|\partial^n_{x_1} G(|\b{x}|) - \sum_{s=0}^{p_{\text{small-$|x_1|$}}} \left( \partial^{s+n}_{x_1} G(|\b{x}|) \right)|_{x_1=0} \frac{x_1^s}{s!} \right| / \left| \partial^n_{x_1} G(|\b{x}|)  \right| \leq \frac{M_{p_{\text{small-$|x_1|$}}+n+1}}{m_n(p_{\text{small-$|x_1|$}}+1)!} \left(\frac{|x_1|}{\overline{x}} \right)^{p_{\text{small-$|x_1|$}}+1}.
\end{equation}
In either case the relative error bound is solely dependent on the ratio ${|x_1|}/{\overline{x}}$. 

Similar bounds for absolute and relative error can be derived using Assumptions~\ref{ass:deriv-growth-cubed-odd} and~\ref{ass:deriv-growth-cubed-odd-even} for the Biharmonic PDE in 2D.

\subsection{Overall Hybrid Numerical-Symbolic Algorithm}
Given that the dominant error behaviors for both the small-$|x_1|$ expansion and the large-$|x_1|$ recurrence depend on $|x_1|/\overline{x}$, we may use its value as a decision criterion for a hybrid scheme combining the two. For a parameter $\xi \in \R_{> 1}$, we define two regions $\Omega_{\xi,\text{large-$|x_1|$}}, \Omega_{\xi,\text{small-$|x_1|$}}$:
\begin{align*}
\Omega_{\xi,\text{large-$|x_1|$}} &= \{(x_1, \dots, x_d): |x_1|/\overline{x}\geq 1/\xi \},\\
\Omega_{\xi,\text{small-$|x_1|$}} &= \{(x_1, \dots, x_d): |x_1|/\overline{x}< 1/\xi \}.
\end{align*}
Given $\xi$ and $P\in \N$, a hybrid algorithm to compute 
\[
\partial^1_{x_1} G(\b{x}), \dots, \partial^P_{x_1} G(\b{x})
\]
is given in Algorithm \ref{alg:cap}.

\begin{figure}[!ht]
\centering
\begin{tikzpicture}[scale=0.8]
\fill [color=black!20](0,0) -- (70:5) -- ++(2.5,0) -- ($(-70:5) + (2.5,0) $)-- ++(-2.5,0) --cycle;
\fill [color=black!20](0,0) -- (110:5) -- ++(-2.5,0) -- ($(-110:5) + (-2.5,0) $)-- ++(2.5,0) --cycle;
\draw [->, thick] (-5,0) -- (5,0) node [anchor=north,pos=1] {${x}_1$};
\draw [->, thick] (0,-5) -- (0,5) node [anchor=west,pos=1] {${x}_2$};
\foreach\rad in {1,2,3,4}
  \draw  [ dashed](0,0) circle (\rad);
\node [anchor=south] at (0,3) {$\Omega_{\xi,\text{small-$|x_1|$}}$};
\node [anchor=west] at (2,2) {$\Omega_{\xi,\text{large-$|x_1|$}}$};
\end{tikzpicture}

\caption{
$\Omega_{\xi,\text{small-$|x_1|$}}$ and $\Omega_{\xi,\text{large-$|x_1|$}}$, indicating
the regions of space in which each respective component of the hybrid algorithm
is being used.
}
\end{figure}

\begin{algorithm}
\caption{Hybrid Numerical-Symbolic Algorithm for Derivative Computation}\label{alg:cap}
\begin{algorithmic}
\Precomputation{}
\Inputs{
\vspace{-1em}
\begin{itemize}
\item$\mathcal{L}_{\b{x}}$ such that $\mathcal{L}_{\b{x}} G(|{\b x}|)=\sum_{\b{q} \in \mathcal{M}(c)} p_{\b{q}}({\b x}) \partial^{\b{q}}_{{\b x}} G(|{\b x}|) = \delta(\b{x})$ with $\b p_{\b q}\in \C[x_1, \dots, x_d]$
\end{itemize}}
\Outputs{
\vspace{-1em}
\begin{itemize}
\item Large-$|x_1|$ recurrence: $\sum_{i=0}^a  \sum_{j=0}^{h}  \sum_{l=0}^j u_{ijl}(\b{x}) \partial^{n-l+i}_{x_1} G(|{\b x}|)= 0 \qquad  (n \in \N_{\geq 0})$
\item Small-$|x_1|$ recurrence: $\sum_{i=0}^a  \sum_{j=0}^{h}  \sum_{l=0}^j u_{ijl}(\b{x})|_{x_1=0} \left(\partial^{n-l+i}_{x_1} G(|{\b x}|) \right)|_{x_1=0}= 0, \qquad  n \in \N_{0}$
\end{itemize}}
\StateLong{Symbolically translate the PDE that $G(\b{x})$ satisfies into an ODE that $G(|\b{x}|)$ satisfies with respect to $x_1$ following the procedure in Section~\ref{sec:pde-to-ode} (see Example~\ref{myex:laplace-ode}).}
\StateLong{Symbolically translate this ODE into the large-$|x_1|$ recurrence following the procedure in Section~\ref{sec:large-x1} (see Example~\ref{myex:laplace-large-x1}).}
\StateLong{Translate the large-$|x_1|$ recurrence into the small-$|x_1|$ recurrence following the procedure in Section~\ref{sec:small-x1} (see Example~\ref{myex:small-x1-laplace}).}
\Onlinecomputation{}
\Inputs{
\vspace{-1em}
\begin{itemize}
\item Large-$|x_1|$ recurrence: $\sum_{i=0}^a  \sum_{j=0}^{h}  \sum_{l=0}^j u_{ijl}(\b{x}) \partial^{n-l+i}_{x_1} G(|{\b x}|)= 0 \qquad  (n \in \N_{\geq 0})$
\item Small-$|x_1|$ recurrence: $\sum_{i=0}^a  \sum_{j=0}^{h}  \sum_{l=0}^j u_{ijl}(\b{x})|_{x_1=0} \left(\partial^{n-l+i}_{x_1} G(|{\b x}|) \right)|_{x_1=0}= 0, \qquad  n \in \N_{0}$
\item $p_{\text{small-$|x_1|$}} \in \N$, $\xi \in \R_{>1}$, $P \in \N$
\item $\partial^0_{x_1} G(\b{x}), \dots, \partial^a_{x_1} G(\b{x})$, where $a$ is the order of the ODE (these $a+1$ base cases can be computed using a symbolic tool such as SymPy; $a$ depends on the PDE and is typically small relative to $P$)
\end{itemize}
}
\Outputs{
\vspace{-1em}
\begin{itemize}
\item $\partial^1_{x_1} G(\b{x}), \dots, \partial^P_{x_1} G(\b{x})$
\end{itemize}
}
\If{$|x_1|/\overline{x} \geq \frac{1}{\xi}$(i.e. $\b{x} \in \Omega_{\xi,\text{large-$|x_1|$}}$)} 
    \State Compute $\partial^{a+1}_{x_1} G(\b{x}), \dots, \partial^{P}_{x_1} G(\b{x})$ using the large-$|x_1|$ recurrence.
\Else
    \StateLong{Compute $(\partial^0_{x_1} G(\b{x}))|_{x_1=0}, (\partial^2_{x_1} G(\b{x}))|_{x_1=0}, \dots, \left(\partial^{v_1}_{x_1} G(\b{x})\right)|_{x_1=0}$,
      where $v_1 \leq a$ is the largest
      even number less than or equal to $a$.}
    \StateLong{Compute $(\partial^{v_1}_{x_1} G(\b{x}))|_{x_1=0}, (\partial^{v_1+2}_{x_1} G(\b{x}))|_{x_1=0}, \dots, \left(\partial^{v_2}_{x_1} G(\b{x})\right)|_{x_1=0}$ where $v_2 \leq P+p_{\text{small-$|x_1|$}}$  is the largest even number less than or equal to $P+p_{\text{small-$|x_1|$}}$, using the small-$|x_1|$  recurrence.}
    \For{$i \in \{1, \dots, P\}$}
      \State Compute $\partial^i_{x_1} G(\b{x}) = \sum_{k=0,i+k\text{ even}}^{p_{\text{small-$|x_1|$}}} (\partial^{i+k}_{x_1}G(\b{x}))|_{x_1=0}  \frac{x_1^k}{k!} $.
    \EndFor
\EndIf 

\end{algorithmic}
\end{algorithm}

\subsubsection{Error Modeling for the Hybrid Algorithm}
\label{sec:hybrid-error}
As an example, for the case of Laplace PDE in 2D, we can derive a combined error bound for $\partial^9_{x_1} G(|\b{x}|)$, when $p_{\text{small-$|x_1|$}} = 8$, $n =9$, and for $\xi \in \R_{>1}$, using \eqref{eq:odd-rel-error} and \eqref{eq:large-x1-9}:
\begin{equation}
\label{eq:combined-err}
\begin{aligned}
\max\Bigg( \max_{\b{x} \in \Omega_{\xi,\text{small-$|x_1|$}}} \left(\left|\partial^n_{x_1} G(|\b{x}|) - \sum_{s=0}^{p_{\text{small-$|x_1|$}}} \left( \partial^{s+n}_{x_1} G(|\b{x}|) \right)|_{x_1=0} \frac{x_1^s}{s!} \right| / \left| \partial^n_{x_1} G(|\b{x}|)  \right|\right), \\
\max_{\b{x} \in \Omega_{\xi,\text{large-$|x_1|$}}} \left(\frac{3(-\left(\frac{\overline{x}}{x_1}\right)^8+14\left(\frac{\overline{x}}{x_1}\right)^6-14\left(\frac{\overline{x}}{x_1}\right)^2+1)}{10(7\left(\frac{\overline{x}}{x_1}\right)^6-35\left(\frac{\overline{x}}{x_1}\right)^4+21\left(\frac{\overline{x}}{x_1}\right)^2-1)}
\right) \Bigg) \\
\lessapprox  \max_{\b{x}}
\begin{cases} 
      \frac{M_{18}}{m_{9}(p_{\text{small-$|x_1|$}}+1)!} \left({|x_1|/\overline{x}}\right)^9  & \b{x} \in \Omega_{\xi,\text{small-$|x_1|$}}    \\
      C\left(\overline{x}/|x_1| \right)^2 & \b{x} \in  \Omega_{\xi,\text{large-$|x_1|$}}. 
   \end{cases}
\\
\leq \max\left(\frac{M_{18}}{m_{9}(p_{\text{small-$|x_1|$}}+1)!}\left(\frac{1}{\xi}\right)^9, C\xi^2 \right)
\end{aligned}
\end{equation}
for some constant $C$ independent of $\b x$.

When $\b{x} \in  \Omega_{\xi,\text{large-$|x_1|$}}$ \eqref{eq:combined-err} uses the error bound from \eqref{eq:large-x1-9}, and applies it to the last recurrence step only. Hence, we use $\lessapprox$ when $\b{x} \in \Omega_{\xi,\text{large-$|x_1|$}}$, in the sense of Section~\ref{sec:large-x1-error-model}, with the additional stipulation that only the error from the last recurrence step is captured. This latter approximation is justified because we observe that the error in the last recurrence step dwarfs the error in preceding recurrence steps.

Concrete error bounds for other PDEs can be derived in an analogous fashion: for a given PDE, one substitutes the appropriate asymptotic growth assumptions (e.g., Assumptions~\ref{ass:deriv-growth-cubed-odd}--\ref{ass:deriv-growth-cubed-odd-even} for the Biharmonic PDE) and the PDE-specific rounding error model into the same max-of-two-regions structure used in \eqref{eq:combined-err}. The type of dependence on $\xi$ for the combined error is a feature that generalizes beyond the case given by \eqref{eq:combined-err}. In a sense, \eqref{eq:combined-err} resembles other error estimates in scientific computing in that it calls for a balancing of truncation and rounding error as governed by the free parameter $\xi$, much like the error
in numerical differentiation is governed by the choice of a mesh spacing parameter.

\section{Incorporating Recurrences into QBX}
\label{sec:qbx}
A particularly striking setting in which the recurrences introduced above lead to savings of computational cost is the method of quadrature by expansion (QBX) for the evaluation of layer potentials. For example, incorporating their use in the context of QBX can reduce the number of floating point operations compared to a baseline by between one and two orders of magnitude, depending on the PDE and truncation order (cf.~Figures~\ref{fig:cost-2d} and~\ref{fig:cost-3d}). In this section, we briefly review QBX and how to incorporate recurrences into it. We then derive a Cartesian version of Target-Specific QBX~\cite{Siegel_2018} (individually leading to a substantial cost savings), and finally show numerical experiments measuring the computational cost impact of each of the algorithmic steps.

\subsection{QBX Review}
Suppose we have a fundamental solution $K(\b{x}, \b{y})$ where $\b{x} ,\b{y}\in \R^d$. Then if we have a density on $\Gamma$ given by $\sigma:\Gamma\to \mathbb R$, the single layer potential at some $\b{x} \in \Gamma$ is given by:
\begin{equation}
S\sigma(\b{x}) = \int_{\Gamma} K(\b{x}, \b{y})\sigma(\b{y}) d\b{y}.
\end{equation}

Quadrature by expansion (QBX) evaluates layer potentials with singular kernels by using an expansion to approximate the kernel \cite{klockner2013quadrature}. If we let $\b{\nu}(\b{y})$ denote the outwards normal to $\Gamma$ at $\b{y}$, let $r \in \R_{>0}$ be the expansion radius and $p_{QBX}$ is the truncation order, we have from \cite{epstein2013convergence}:
\begin{equation}
\label{eq:qbx}
S\sigma(\b{x}) \approx  \int_{\Gamma} \sum_{i=0}^{p_{\text{QBX}}} \frac{r^i}{i!}\left. \left( \frac{d^i}{dt^i} K(\b{x} + t\b{\nu}, \b{y})  \right)\right|_{t=-r} \sigma(\b{y}) d\b{y}.
\end{equation}
We refer to \cite{epstein2013convergence} for a detailed discussion of the errors incurred in the approximation \eqref{eq:qbx}. Our goal in this work is to speed up the computation of kernel derivatives in \eqref{eq:qbx} using recurrences.

\subsection{Rotations and Line Expansions}
\label{sec:rot}
Let $\b{x},\b{y} \in \R^d$. If $\b{\nu}$ is arbitrarily aligned, then, for $n \in \N$, a computation to evaluate the subexpression
\begin{equation}
\label{eq:bad-chain}
\left. \left( \frac{d^n}{dt^n} K(\b{x} + t\b{\nu}, \b{y})  \right)\right|_{t=-r}
\end{equation}
of \eqref{eq:qbx} can require a large number of floating point operations resulting from the evaluation of chain rule terms. Consider that \eqref{eq:bad-chain} is equivalent to taking $n$ derivatives of the composition $\frac{d^n}{dt^n}  f(\b{g}(t))$ and evaluating it at $t = -r$ where if $\b{z} \in \R^d$
\[
f : \b{z} \mapsto K(\b{z},\b{y})
\]
and
\[
\b{g}: t \mapsto\b{x} + t\b{\nu}.
\]
Thus, without the use of more specific information
on $f: \R^d \to \R$ and $g: \R \to \R^d$, a generic implementation of the symbolic derivative will effectively employ a multi-dimensional Faa Di Bruno formula to compute $\frac{d^n}{dt^n} f(\b{g}(t))$, which will involve a quadratically (in $n$) growing number of cross-terms in its expression.

\begin{figure}[!ht]
\centering
\begin{tikzpicture}[scale=0.65]

\draw [Latex-Latex, dashed] (0,0) -- (-1.41421356237, 1.41421356237) node [anchor=north west,pos=0.74] {$r$};

\draw[fill=black] (0,0) circle (2pt);
\node at (0,0) [minimum size=4pt,inner sep=2pt, anchor=north west,pos=1] {$\b{x}-r\bhat v$};
\draw [->, thick] (-5,0) -- (5,0) node [anchor=north,pos=1] {${x}_1$};
\draw [->] (1.41421356237,1.41421356237) -- (1.41421356237+0.70710678118,1.4142135623+0.70710678118) node [anchor=north,pos=1] {$\bhat v$};
\draw [->, thick] (0,-5) -- (0,5) node [anchor=west,pos=1] {${x}_2$};
\foreach\rad in {2}
  \draw  [ dashed](0,0) circle (\rad);
\draw[blue,thick]
  plot[smooth] coordinates {(-2,3) (1.4142,1.4142) (3,-2)};
\node at (3.3,-1.9) [] {$\Gamma$};

\begin{scope}[rotate around={45:(11,0)}]

\draw [->, thick] (6,0) -- (16,0) node [anchor=north,pos=1] {${x}_1'$};
\draw [->, thick] (11,-5) -- (11,5) node [anchor=west,pos=1] {${x}_2'$};
\foreach\rad in {2}
  \draw  [ dashed](11,0) circle (\rad);
\end{scope}

\draw [->] (11+1.41421356237,1.41421356237) -- (11+1.41421356237+0.70710678118,1.4142135623+0.70710678118) node [anchor=north,pos=1] {$\bhat v'$};

\draw[fill=black] (11,0) circle (2pt);
\node at (10,0)  {\scriptsize $\b{x'}-r\bhat v'$};

\draw[blue,thick]
  plot[smooth] coordinates {(-2+11,3) (1.4142+11,1.4142) (3+11,-2)};
\node at (3.3+11,-1.9) [] {$\Gamma$};

\end{tikzpicture}

\caption{Performing the transformation $\b{x} \mapsto \b{x}'$, a rotation around $x-r\bhat v$, the QBX expansion center, such that $\bhat v'=\hat x_1' \b e_1$.
Without loss of generality, $\b x-r\bhat v$ is chosen
as the origin in the left-hand sub-figure.
}
\label{fig:rotational-sym1}
\end{figure}

Using the assumption that $G$ is rotationally symmetric, the cross-terms and their associated computational expense can be avoided. We perform a coordinate rotation around the center $\b{x}-r \b{\nu}$ such that $\b{\nu} \mapsto \b{\nu'} = \hat x_1'$ (see Figure~\ref{fig:rotational-sym1}). Under this transformation, $\b{x} \mapsto \b{x}', \b{y} \mapsto \b{y}'$, where $\b{x}', \b{y}'$ refer to the coordinates of $\b{x}, \b{y}$ in the rotated frame. Performing this rotation implies that \eqref{eq:bad-chain} is equivalent to
\begin{equation}
\label{eq:good-chain}
\left. \left( \frac{d^n}{dt^n} K(\b{x}' + t\hat x_1, \b{y}')  \right)\right|_{t=-r}.
\end{equation}
If $z_1 \in \R$, then \eqref{eq:good-chain} is equivalent to $\frac{d^n}{dt^n} \overline{f}(\overline{g}(t))$ where:
\[
\overline{f} : z_1 \mapsto K((z_1,x_2, \dots, x_d),\b{y})
\]
and
\[
\overline{g}: t \mapsto x_1 +t.
\]
Thus $\overline{f}: \R \to \R$ and $\overline{g}: \R \to \R$, and computing $\frac{d^n}{dt^n} \overline{f}(\overline{g}(t))$ does not require a multidimensional Faa Di Bruno rule or any sort of chain rule since derivatives of $\overline{g}$ are trivial. The line expansion for a source ($\b{y'}$)-target ($\b{x'}$) pair after rotation will look like:
\begin{equation}
\label{eq:flops}
\sum_{i=0}^{p_{\text{QBX}}} \left. \left( \frac{d^i}{dt^i} K(\b{x}' + t\hat x_1, \b{y}')  \right)\right|_{t=-r} \frac{r^i}{i!}.
\end{equation}

In practice, to use the line expansion formulation of QBX given by \eqref{eq:qbx}, we will need to perform a rotation for each source-target pair as described above to avoid the unnecessary computation of multi-dimensional chain rule cross-terms. In the QBX literature, approaches like the above have come to be called ``target-specific'' expansions. In effect, the approach described above represents a generalization of this concept to the setting of Cartesian/Taylor expansions.

Target-specific QBX expansions in the literature \cite{Siegel_2018,wala2020optimization} have been based on spherical harmonics and only indirectly perform these rotations before evaluating derivatives of the kernel for source-target pairs. This is because their expression for the layer potential contribution for a source-target pair is coordinate-independent and depends on the angle between the center-target and center-source vectors. The identity
\[
\sum_{n=0}^p \sum_{m=-n}^n L^m_n|\b{t}-\b{c}|^n Y^{m}_n(\theta_{\b{t}-\b{c}}, \phi_{\b{t}-\b{c}}) = \frac{1}{4\pi} \sum_{n=0}^p \frac{|\b{t}-\b{c}|^n}{|\b{s}-\b{c}|^{n+1}} P_n(\cos \gamma)
\]
((18) in \cite{wala2020optimization})
permits a computation of the truncated kernel expansion 
for a source at $\b{s} \in \R^3$, target at $\b{t} \in \R^3$, and center at $\b{c} \in \R^3$ while only considering $\gamma$, the angle between the center-target and center-source vectors. While the Cartesian formulation does not make the dependency on angle as explicit (and thus is most conveniently applied via the actual rotation), the benefit in computational cost is asymptotically identical and comparable in practice.

\subsection{Numerical Experiments on Rotation/Recurrence-Augmented QBX}

\begin{figure}[h]
\centering
\includegraphics{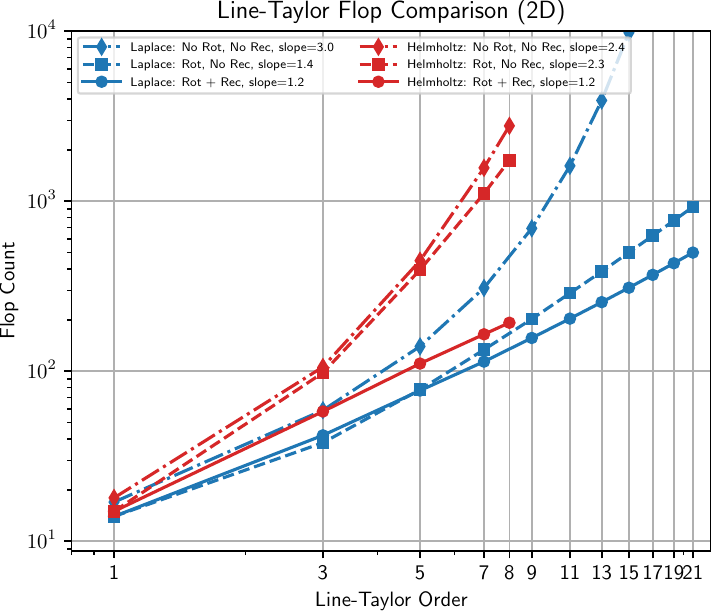}

\caption{Floating-point cost for computing a line-Taylor expansion \eqref{eq:flops} of varying order for Laplace and Helmholtz in 2D. Three configurations are compared: (1)~no rotation and no recurrence (dash-dot), (2)~rotation but no recurrence (dashed), and (3)~rotation with recurrence (solid). Log-log best-fit slopes are shown in the legend. Common subexpression elimination was used in all cases to reduce the number of flops.}
\label{fig:cost-2d}
\end{figure}

\begin{figure}[h]
\centering
\includegraphics{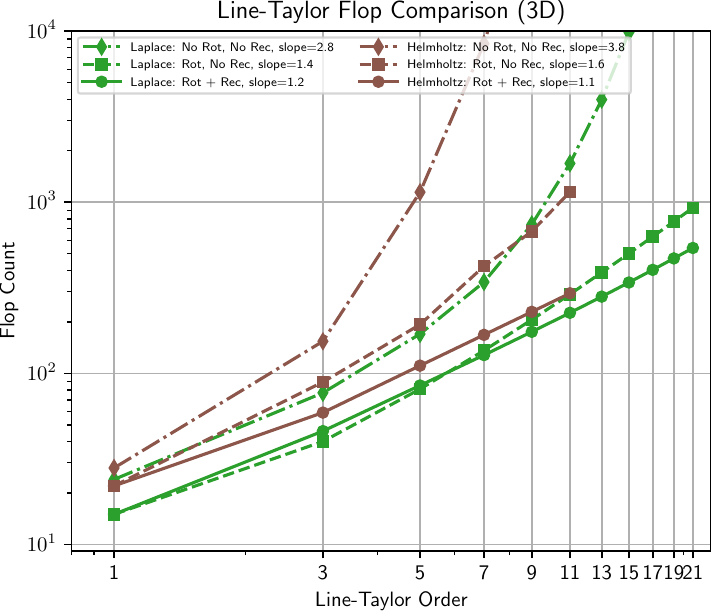}

\caption{Floating-point cost for computing a line-Taylor expansion \eqref{eq:flops} of varying order for Laplace and Helmholtz in 3D. Configurations and legend as in Figure~\ref{fig:cost-2d}.}
\label{fig:cost-3d}
\end{figure}

\subsubsection{Computational Cost}

For a given source-target pair, we consider the computational cost of computing a line expansion in \eqref{eq:flops} under three configurations: (1) without rotation and without recurrence, (2) with rotation but without recurrence, and (3) with rotation and with recurrence. When a recurrence is \emph{not} in use, the expansion is formed by computing the derivatives \eqref{eq:bad-chain} or \eqref{eq:good-chain} via Sympy~\cite{10.7717/peerj-cs.103}. In all cases, common subexpression elimination (CSE, also via Sympy) is used to reduce the number of floating point operations. For an order-$p_{\text{QBX}}$ line expansion, we must compute $O(p_{\text{QBX}})$ derivatives. When using a recurrence, we can expect each derivative to have amortized $O(1)$ cost, and thus $O(p_{\text{QBX}})$ complexity to compute an order-$p_{\text{QBX}}$ line expansion. Without recurrences, a $p$th order derivative can cost $O(p)$ or more, leading to a $O(p_{\text{QBX}}^2)$ or higher cost to compute a line expansion of order $p_{\text{QBX}}$.

In Figures~\ref{fig:cost-2d} and~\ref{fig:cost-3d}, we show the number of floating point operations for Laplace and Helmholtz in both 2D and 3D. We observe that the rotation alone provides a significant cost reduction by eliminating multi-dimensional chain rule cross-terms. On top of this, recurrences provide a further reduction: the recurrence and no-recurrence approaches diverge immediately with respect to the line expansion order. The log-log best-fit slopes confirm that the rotation with recurrence approach achieves approximately linear cost in the expansion order $p_{\text{QBX}}$ across all four kernels tested (slopes $\approx 1.1$--$1.2$), consistent with the $O(p_{\text{QBX}})$ complexity predicted by the amortized $O(1)$ cost per recurrence step. By contrast, approaches without recurrences exhibit super-algebraic growth, with apparently-increasing slopes.

\subsubsection{Error}

When incorporating the recurrences developed here into QBX for layer potential evaluation, it is relevant to ask whether the additional error contribution (analyzed in Section~\ref{sec:hybrid-error}) results in a noticeable error increase in the overall scheme. To investigate this, we compute the relative $L^\infty$ error
\begin{equation}
\label{eq:qbx-error}
E(u) = \frac{\|u - u_{\text{true}}\|_\infty}{\|u_{\text{true}}\|_\infty}
\end{equation}
in single layer potential evaluation on the ellipse $(2\cos t, \sin t)$ for $t \in [0, 2\pi]$ (aspect ratio 2:1) for the Laplace PDE in 2D. The ellipse is discretized using composite order-16 Gauss-Legendre quadrature with $n_p \in \{60, 200, 360\}$ panels. The QBX expansion centers are placed along the inward normal at a distance $2.5\,h$ from the boundary, where $h = 2\pi/n_p$ is the panel size. The true single layer potential $u_{\text{true}}$ is obtained from the known eigenvalue $\mu_n = \frac{1}{2n}(1 + ((1-r)/(1+r))^n)$ of the single layer operator on the ellipse with eccentricity $r = 1/a$ and oscillating source density $\rho(t) = \cos(nt)$ with $n=10$. We compare QBX only ($u_{\text{qbx}}$) versus incorporating recurrences to compute the derivatives in \eqref{eq:qbx} ($u_{\text{qbxrec}}$) for different mesh resolutions $h$ and QBX truncation orders $p_{\text{QBX}}$. For a description of $m$ and $p_{\text{small-$|x_1|$}}$ see Sections~\ref{sec:small-x1} and \ref{sec:error-modeling}.

In Figure~\ref{figure:QBXLaplace}, we find that for $p_{\text{QBX}} \in \{5, 7\}$, our recurrences do not add appreciable error on top of the existing solution error compared to using QBX to compute layer potentials. For $p_{\text{QBX}} \in \{9, 11\}$, the recurrence-based evaluation floors at approximately one digit above machine precision, losing about one digit compared to standard QBX at the finest mesh resolutions.

\begin{figure}[h]
\centering
\includegraphics{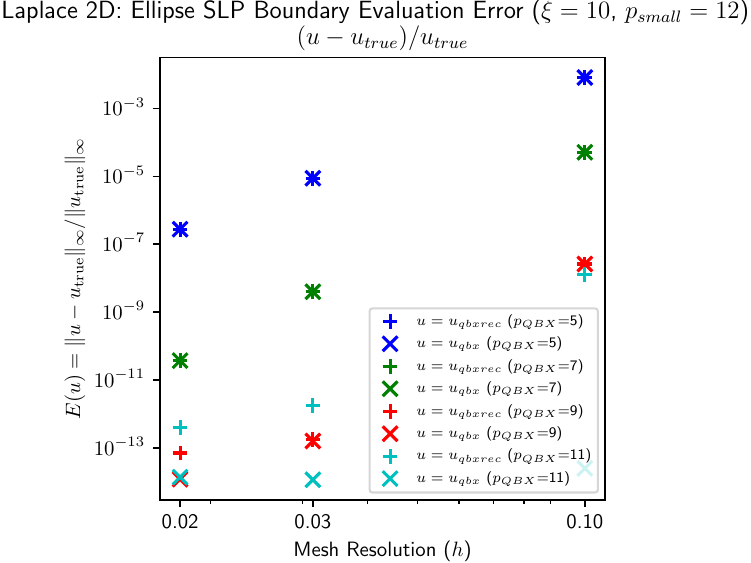}
\caption{The relative error $E(u)$ for evaluating the single layer potential for the Laplace PDE in 2D on the boundary of an ellipse $(2\cos(t), \sin(t))$, $t \in [0, 2\pi]$, with oscillating source density $\rho(t)=\cos(10t)$, incorporating recurrences ($u_{\text{qbxrec}}$) versus using QBX only ($u_{\text{qbx}}$) for different panel sizes $h$ and QBX orders. Here $\xi$ is the dispatch boundary between the large-$|x_1|$ recurrence and the small-$|x_1|$ expansion (see Section~\ref{sec:hybrid-error}), and $p_{\text{small-$|x_1|$}}$ is the truncation order of the small-$|x_1|$ expansion.}
\label{figure:QBXLaplace}
\end{figure}
\section{Conclusion}
\label{sec:conclusions}

We have presented an efficient approach for evaluating high-order derivatives of Green's functions using a hybrid numerical-symbolic algorithm. We provide error analysis for our overall algorithm for derivative evaluation. We support our error analysis with empirical results for the Helmholtz, Laplace, and Biharmonic PDE in 2D, with the error behavior in 3D expected to behave analogously.

We have further introduced a new rotation-based method for target-specific QBX evaluation in the Cartesian setting that attains dramatically lower cost than existing symbolic approaches that composes cleanly with the recurrence scheme. Through the combination of both schemes, a significant reduction in the computational cost of QBX is attained. Finally, we find that these improvements, when incorporated into QBX, contribute little to no error.

Extensions of this work of immediate interest include the availability of recurrences for Green's functions with relaxed symmetry assumptions as well as the numerical properties of such recurrences.

\appendix
\section{Numerical Experiments Supporting Assumptions \ref{ass:deriv-growth-odd}--\ref{ass:deriv-growth-cubed-odd-even}}
\label{app:assumption-numerics}

We design the following numerical experiment to support Assumptions \ref{ass:deriv-growth-odd}--\ref{ass:deriv-growth-cubed-odd-even}. We plot a heat map (with $x_1$ and $x_2$ coordinates shown logarithmically) of the expression for Laplace/Helmholtz PDE in 2D:
\begin{equation}
\label{eq:app-laplace-odd}
\left|\max_{0\leq \xi\leq x_1}\left(\partial^{\mu}_{x_1}G(|\b{x}|)\right) \right| / \left(\frac{|x_1|}{\overline{x}^{\mu+1}} \right) \qquad\text{ (Laplace/Helmholtz 2D)}
\end{equation}
when $\mu = 5$ and the expression
\begin{equation}
\label{eq:app-laplace-even}
\left|\max_{0\leq \xi\leq x_1}\left(\partial^{\nu}_{x_1}G(|\b{x}|)\right) \right| / \left(\frac{1}{\overline{x}^{\nu}} \right) \qquad\text{ (Laplace/Helmholtz 2D)}
\end{equation}
when $\nu = 6$. We plot a heat map of the expression for Biharmonic PDE in 2D:
\begin{equation}
\label{eq:app-biharm-odd}
\left|\max_{0\leq \xi\leq x_1}\left(\partial^{\mu}_{x_1}G(|\b{x}|)\right) \right| / \left(\frac{|x_1^3|}{\overline{x}^{\mu+1}} \right) \qquad\text{ (Biharmonic 2D)}
\end{equation}
when $\mu = 5$ and the expression
\begin{equation}
\label{eq:app-biharm-even}
\left|\max_{0\leq \xi\leq x_1}\left(\partial^{\nu}_{x_1}G(|\b{x}|)\right) \right| / \left(\frac{|x_1^2|}{\overline{x}^{\nu}} \right) \qquad\text{ (Biharmonic 2D)}
\end{equation}
when $\nu = 6$.

From these, we observe the following statements regarding the claimed bounds:
\begin{itemize}
\item Figure~\ref{fig:fig10} suggests that for Laplace/Helmholtz 2D for $\mu = 5$ when $|x_1|/\overline{x} < 1$:
\[
1\leq \left|\max_{0\leq \xi\leq x_1}\left(\partial^{\mu}_{x_1}G(|\b{x}|)\right) \right| / \left(\frac{|x_1|}{\overline{x}^{\mu+1}} \right) \leq 100 \qquad\text{ (Laplace/Helmholtz 2D)},
\]
and
\item Figure~\ref{fig:fig10} also suggests for Biharmonic 2D for $\mu = 5$ when $|x_1|/\overline{x} < 1$:
\[
1\leq \left|\max_{0\leq \xi\leq x_1}\left(\partial^{\mu}_{x_1}G(|\b{x}|)\right) \right| / \left(\frac{|x_1^3|}{\overline{x}^{\mu+1}} \right) \leq 100. \qquad\text{ (Biharmonic 2D)}
\]
\item Figure~\ref{fig:fig11} suggests that for Laplace/Helmholtz 2D for $\nu = 6$ when $|x_1|/\overline{x} < 1$:
\[
1\leq \left|\max_{0\leq \xi\leq x_1}\left(\partial^{\nu}_{x_1}G(|\b{x}|)\right) \right| / \left(\frac{1}{\overline{x}^{\nu}} \right) \leq 100 \qquad\text{ (Laplace/Helmholtz 2D)},
\]
and
\item Figure~\ref{fig:fig11} also suggests for Biharmonic 2D for $\nu = 6$ when $|x_1|/\overline{x} < 1$:
\[
1\leq \left|\max_{0\leq \xi\leq x_1}\left(\partial^{\nu}_{x_1}G(|\b{x}|)\right) \right| / \left(\frac{|x_1|^2}{\overline{x}^{\nu}} \right) \leq 100 \qquad\text{ (Biharmonic 2D)}
\]
\end{itemize}

\begin{figure}[h]
\centering
\includegraphics{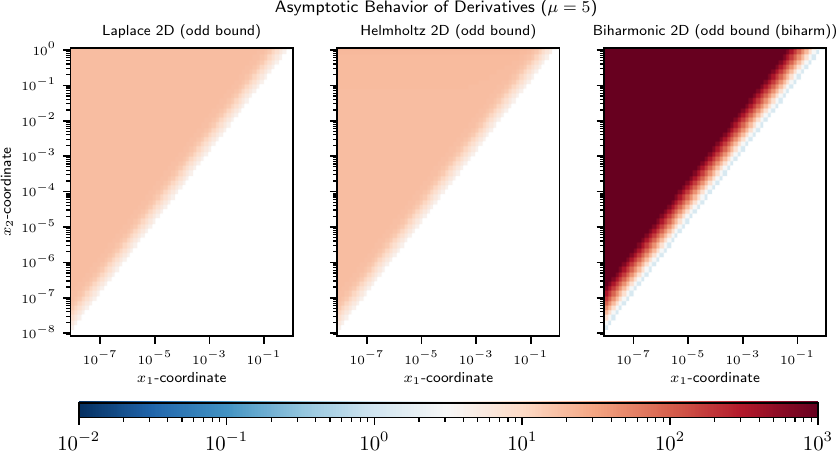}

\caption{Numerical validation of Assumptions~\ref{ass:deriv-growth-odd} and~\ref{ass:deriv-growth-cubed-odd} for odd derivative order $\mu=5$. The ratio \eqref{eq:app-laplace-odd} (Laplace/Helmholtz) and \eqref{eq:app-biharm-odd} (Biharmonic) is shown on a log-log grid. The region $|x_1|/\overline{x} \geq 1$ is blanked (white) because the assumptions and the small-$|x_1|$ error bounds they support only apply when $|x_1|/\overline{x} < 1$. In the displayed region, the ratio remains bounded, supporting the assumed asymptotic growth rates.}
\label{fig:fig10}
\end{figure}

\begin{figure}[h]
\centering
\includegraphics{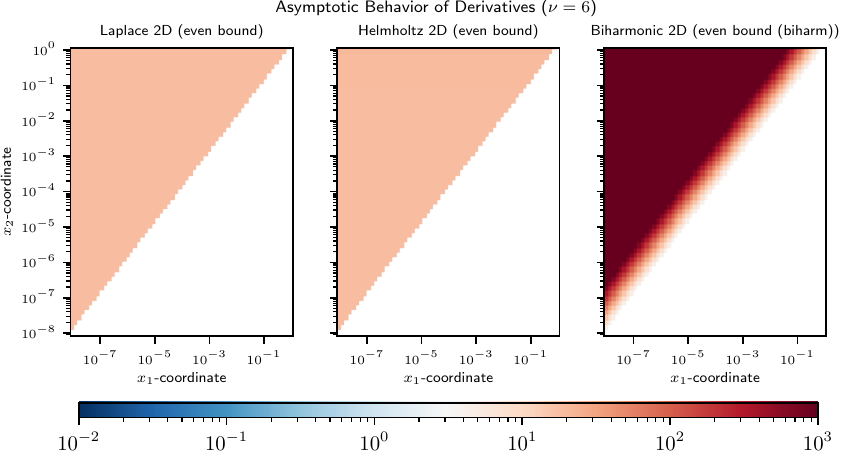}

\caption{Numerical validation of Assumptions~\ref{ass:deriv-growth-odd-even} and~\ref{ass:deriv-growth-cubed-odd-even} for even derivative order $\nu=6$. The ratio \eqref{eq:app-laplace-even} (Laplace/Helmholtz) and \eqref{eq:app-biharm-even} (Biharmonic) is shown on a log-log grid. The region $|x_1|/\overline{x} \geq 1$ is blanked (white) because the assumptions and the small-$|x_1|$ error bounds they support only apply when $|x_1|/\overline{x} < 1$. In the displayed region, the ratio remains bounded, supporting the assumed asymptotic growth rates.}
\label{fig:fig11}
\end{figure}

\section*{Acknowledgments}
The authors' research was supported by the National Science Foundation
under grants SHF-1911019 and DMS-2410943, by the US Department of Energy
under contract DE-NA0003963, as well as by the Siebel School of Computing
and Data Sciences at the University of Illinois at Urbana-Champaign.
The authors would like to thank Shawn Lin and Xiaoyu Wei for helpful discussions.

\bibliographystyle{plainurl} 
\bibliography{refs} 

\end{document}